\def\ps@headings{%
\def\@oddhead{\mbox{}\scriptsize\rightmark \hfil \thepage}%
\def\@evenhead{\scriptsize\thepage \hfil \leftmark\mbox{}}%
\def\@oddfoot{}%
\def\@evenfoot{}}
\def\EE{{\mathbb{E}}}\def\PP{{\mathbb{P}}}
\def\setZ{\mathbb{Z}}
\newcommand{\fall}{\,\forall\,}
\newtheorem{theorem}{Theorem}
\newtheorem*{theorem*}{Theorem}
\newtheorem{lemma}[theorem]{Lemma}
\newtheorem*{corollary*}{Corollary}
\newcommand{\be}{\begin{eqnarray}}
\newcommand{\ee}{\end{eqnarray}}
\begin{document}

\title{Epidemic Thresholds with External Agents}


\author{\IEEEauthorblockN{Siddhartha Banerjee, 
    Avhishek Chatterjee, 
    Sanjay Shakkottai 
    }
  \IEEEauthorblockA{
    Department of Electrical and Computer Engineering\\
    The University of Texas at Austin, USA \\
    Email: \{siddhartha, avhishek\}@utexas.edu,shakkott@austin.utexas.edu}
}

\maketitle
\begin{abstract}
We study the effect of external infection sources on phase transitions in epidemic processes. In particular, we consider an epidemic spreading on a network via the SIS/SIR dynamics, which in addition is aided by external agents - sources unconstrained by the graph, but possessing a limited infection rate or virulence. Such a model captures many existing models of externally aided epidemics, and finds use in many settings - epidemiology, marketing and advertising, network robustness, etc. We provide a detailed characterization of the impact of external agents on epidemic thresholds. In particular, for the SIS model, we show that any external infection strategy with constant virulence either fails to significantly affect the lifetime of an epidemic, or at best, sustains the epidemic for a lifetime which is polynomial in the number of nodes. On the other hand, a random external-infection strategy, with rate increasing linearly in the number of infected nodes, succeeds under some conditions to sustain an exponential epidemic lifetime. We obtain similar sharp thresholds for the SIR model, and discuss the relevance of our results in a variety of settings.
\end{abstract}

\section{Introduction}
\label{sec:intro}

We study epidemic processes on large graphs in the presence of external agents - i.e., settings in which information/infection spreads in a network via an \emph{intrinsic infection} process (i.e., along the edges of the graph), but in addition, is aided in its spread by \emph{external sources} unconstrained by the graph topology. More specifically, we focus on \emph{characterizing phase transition behaviour in SIS/SIR epidemic processes} in the presence external infection sources of bounded virulence, but whose position is unconstrained by the graph.

Epidemic processes are widely used as an abstraction for various real-world phenomena -- human infections, computer viruses, rumors, information broadcast, product advertising, etc. Though there are many models for such epidemic processes, at a high level they can be divided in two groups -- those in which nodes once infected remain forever in such a state, and those in which nodes can recover from the infection. Henceforth, to distinguish between the two, we refer to the former as \emph{spreading processes}, reserving the term \emph{epidemic process} for the latter. Spreading processes help understand the dynamics of one-way dissemination of some object in a network -- for example: a rumor, a software update, a fitter genetic mutation, etc. On the other hand, epidemic processes are used to study transient infection processes, i.e., those which spread for a while but eventually die out in finite networks. 

A standard model for a \emph{spreading process} is the SI (Susceptible-Infected) dynamics, wherein a `susceptible' node (S) transitions to being `infected' (I) at a rate proportional to the number of infected neighboring nodes. An important metric here is the spreading time, i.e., the time taken for all nodes to get infected. In our earlier work \cite{gopban11}, it was shown that external agents can dramatically shorten the spreading time of the SI process in some graphs - further this remains true even if the agents infect in a random manner.

On the other hand, the metric of interest in \emph{epidemic processes} is usually the \emph{extinction time} -- the time when the infection dies out in the system. A non-trivial aspect of epidemic processes is that they exhibit \emph{phase transition} phenomena - when the infection rate exceeds a threshold, then the infection abruptly switches from being short-lived to being persistent (i.e., with a very large extinction time; in case of infinite graphs, they never die out). This aspect makes them interesting from a mathematical perspective, and also is important for understanding real-life settings, wherein such threshold phenomena have been observed empirically.

The two most widely-used models for epidemic processes are the Susceptible-Infected-Susceptible (SIS) \cite{massganesh05}, and the Susceptible-Infected-Resistant (SIR) dynamics \cite{draief08}. In both models, as in the SI model, a node in the susceptible state (S) transitions to the infected state (I) at a rate equal to an \emph{infection rate} times the number of infected neighbors -- however nodes do not remain infected forever. In the SIS model, an infected node transitions back to the susceptible state (S) according to a \emph{cure rate} -- the critical parameter herein is the ratio of the infection to cure rate, denoted as $\beta$\footnote{It turns out that without loss of generality, we can normalize the cure rate to $1$ -- the ratio of the two is sufficient to characterize the system.}. In the SIR dynamics, a node in state (I) upon being cured enters a \emph{resistant state} (R), whereupon it is removed from the system. Note that the absorbing states of the SIS/SIR models correspond to the epidemic becoming extinct -- for the SIS process, this is when all nodes are in state $(S)$, while for the SIR process it is when all nodes are either in state (S) or (R)\footnote{Specifically, an absorbing state of the SIR model consists of any connected component of nodes in state (R), and the rest in state (S)}. In this work, we characterize the \emph{impact of external sources on phase transition phenomena in SIS/SIR epidemics.}.

In addition to technical contributions to the understanding of epidemic processes (in terms of new techniques and intuition for threshold phenomena), our results are significant in terms of their interpretation for certain real-world settings. Our analysis captures both the worst-case perspective (by how much can the extinction time of a `harmful' epidemic be extended via an \emph{adversarial} external agent?) and the design perspective (how can we design the external policy to prolong the lifetime of `useful' epidemic). In particular, it is instructive to 
keep in mind two particular examples:
\begin{itemize}[nolistsep,noitemsep]
\item \emph{Epidemiology} - Characterizing the spread of a human disease/computer worm in a network, aided by external agents. Here the aim is to understand the worst case scenario that could be induced by external agents unconstrained by locality/geography. In human diseases, the external agents correspond to long-distance travellers; for the propagation of computer viruses spreading on a local network, long-range jumps are via the Internet, or sometimes, through portable disks.
\item \emph{Product advertising} - Enhancing brand recall amongst consumers via viral means (word-of mouth, viral ads on online social networks, etc.) coupled with long-range `broadcast' advertising (TV advertising, targeting customers for special offers, etc.). The aim here is the exact opposite of the previous case -- how can product awareness be sustained for a longer time via well-designed advertising strategies.
\end{itemize}

\subsection{Our Contributions}
\label{ssec:contrib}

We consider graphs $G_n = (V,E)$, parametrized by the number of nodes $|V|=n$, in which an epidemic commences spreading through two interacting processes: an \emph{intrinsic spread}, and an additional \emph{external infection}. The intrinsic spread follows either the SIS or SIR dynamics (refer Section \ref{sec:intro} for an informal description, and Section \ref{sec:model} for details). For the external infection process, similar to the model in \cite{gopban11},  \emph{each} node becomes infected (if susceptible) at a different (zero or non-zero) exponential rate at each instant \emph{in addition to the intrinsic process}; thus at time $t$, the external infection can be represented as a $|V|$-dimensional vector $\bar{L}(t)$ of external infection-rates for each node. We allow the external rates to be chosen as a function of the network state and history (\emph{omniscience}) and further, potentially designed to maximize the extinction time (\emph{adversarial}) -- the only constraint on the external process is that it has \emph{bounded virulence}, i.e., the total external rate at any instant is less than some parameter $\mu$ (which can be a function of $n$). Further, we assume that the external source exists \emph{if and only if} there is at least one infected node in the network, ensuring that our process has the same absorbing states as the original epidemics. 

We focus on characterizing the expected time to extinction $\EE[T_{SIS}]$ (and in case of the SIR model, the number of eventual infected nodes) as a function of four factors: $(i)$ the intrinsic infection rate $\beta$ (with intrinsic cure rate normalized to $1$), $(ii)$ the external virulence $\mu$, $(iii)$ the graph topology, and $iv.$ the external infection policy.
Our main results are as follows:
\begin{enumerate}[nolistsep,noitemsep]
\item (\textbf{Subcritical SIS epidemic}): Under the SIS dynamics with $\mu=O(1)$ (constant external infection), for \emph{any} graph $G$ with maximum degree $d_{\max}$, and \emph{any} external infection policy, if $\beta d_{\max}<1$, then $\EE[T_{SIS}]=O(1)$, i.e., \emph{the epidemic dies out in constant time}. 

For comparison, the critical threshold for subcritical epidemics \emph{without} external infection is $\beta\lambda_1<1$ \cite{massganesh05}, where $\lambda_1$ is the leading eigenvalue of the graph adjacency matrix; note that $d_{avg}\leq\lambda_1\leq d_{\max}$.

\item (\textbf{Critical SIS epidemic}): Under the SIS dynamics with $\mu=O(1)$, for any graph if $\frac{1}{d_{\max}}\ll\beta<\frac{1}{\lambda_1}$, i.e., order-wise greater than $(d_{\max})^{-1}$, but less than $(\lambda_1)^{-1}$, then \emph{there exists an external infection policy} resulting in an extinction time which is \emph{at least polynomial in $n$}. 

Furthermore, \emph{this is tight} in that for any graph with sufficiently large maximum-degree, and any infection policy with the above choices of $\beta$ and $\mu$, the extinction time is also bounded by a \emph{polynomial in $n$}. To the best of our knowledge, this is the first rigorous demonstration of such a polynomial-lifetime regime in SIS epidemics.

\item (\textbf{Supercritical SIS epidemic}): Next, we consider the problem of designing an external infection strategy so as to result in a \emph{supercritical epidemic}, i.e., $\EE[T_{SIS}]=e^{n^{\Omega(1)}}$. Without external infection, the best-known sufficient condition for this is $\beta\eta\geq 1$ \cite{massganesh05}, where $\eta$ is the graph conductance\footnote{More specifically, the local conductance, i.e., the isoperimetric number of all sets of size $\leq n^{\alpha}$ for some $\alpha>0$.}. We show that if the external rate $\mu$ grows with the number of infected nodes $|I|$ as $\gamma|I|$, for some constant $\gamma >0$, and up to some $|I|=n^{\alpha}$, then even a \emph{random external infection strategy} ensures exponential epidemic lifetime provided $\beta\eta+\gamma\geq 1$.

\item (\textbf{Subcritical SIR epidemic}): Finally, we show that $\mu=\Theta(1)$ is a tight threshold for subcritical epidemics -- if $\beta d_{\max}< 1$, then the number of infected nodes is \emph{sub-polynomial in $n$}, whereas if $\beta$ is orderwise greater, then it is possible to infect a polynomial number of nodes independent of where the infection starts spreading. A unique feature of our analysis is that unlike most existing work on the SIR epidemic which essentially reduce the problem to studying a static process, our model necessitates an understanding of the \emph{dynamics} of the SIR model -- we achieve this via a coupling between the SIS and SIR models. 
\end{enumerate}
We formally state these results in Section \ref{sec:results}. In addition, in Section \ref{ssec:examples}, we highlight the importance of our results through detailed analyses of several important settings -- we discuss both the import of our results in real-world examples, as well as apply them to characterize epidemic lifetimes in some important classes of networks.

\subsection{Related Work}

The spread of processes over graphs/networks has been studied in various contexts (e.g., epidemiology \cite{andersonmay92:diseasesbook,Keeling05,Daley01}, sociology \cite{Rogers2003Diffusion,Vesp12}, applied probability \cite{DraiefMass,Liggett99,pastor2007evolution}). 

The SIS/SIR processes has been widely studied to characterize its phase transitions. The foundational work in the probability community focused on infinite regular graphs, in particular, grids and trees \cite{Liggett99}. For finite graphs, phase transitions were first characterized via empirical \cite{kepwhite91:viruses,satves02:scalefree}, and also approximate (mean-field) techniques \cite{chakrabarti2008epidemic,VanMie09}. Other works have provided a more rigorous analysis of phase transition -- in particular, Ganesh et al. \cite{massganesh05} for general graphs, and some specific families, and Berger et al. \cite{bbcs05} for the preferential attachment graph. These papers provide conditions for subcritical and supercritical regimes under the SIS model. Draief et al. \cite{draief08} undertake a similar program for the SIR model. Our work follows in the line of these results, providing a rigorous characterization for epidemics with external sources.

The role of external sources in epidemic spread has been studied under different models. In the context of human disease spread, works include \cite{Colizza+2006} (airline effects), \cite{wangetal09:immuniz} (geographic effects) and \cite{balcan2009multiscale} (multiscale effects).
This is paralleled by studies of epidemics on computer/mobile phone networks \cite{kleinberg07:wlessepi,Moore02,wang09:spreading}. Finally, in the context of social networks, Myers et al. \cite{Myers12} consider a model similar to our model of random external infection, and use Twitter data to estimate the extent of external influence.

Several studies in literature consider network intervention to modify spread properties. The authors in \cite{kempekleintar03:influence} study enhancing epidemics via designing the seed set. In \cite{chakrabarti2008epidemic}, the authors change the graph topology via additional edges; this in some sense is an optimization view of the famous `small-world' graph constructions \cite{kleinberg2002small,newman1999scaling}.  Wagner and Anantharam \cite{Wagner05} consider redistributing edge infection-rates in a line network so as to ensure long-lasting epidemics. A complementary setting to this is considered by Borgs et al. \cite{BorgsAntidote}, wherein the cure-rate can be redistributed among nodes to minimize the infection spread. All of these works only consider static (or one-shot) policies -- we instead allow our external agents to dynamically change their policies.

Finally, our earlier work \cite{gopban11} considers the same model of external infection, but for \emph{spreading} processes (in particular, the SI dynamics). However, unlike spreading processes, where external agents always speed up spreading, this work shows that the situation is much more subtle for \emph{epidemic} processes.

\section{Model}
\label{sec:model}


\emph{Intrinsic infection}: As mentioned before, we consider epidemic processes on an underlying graph $G(V,E)$ (with $|V|=n$). The process evolves in continuous-time $t$ following either the SIS or the SIR dynamics. 

In the SIS model, a node exists in one of the two states: susceptible $(0)$ or infected $(1)$. The state of the network at any time $t$ is given by $\mathbf{X}(t) \in \{0,1\}^n$ (with $X_i(t)$ the state of node $i$ at time $t$). Infected nodes propagate the epidemic to neighboring susceptible nodes at rate $\beta$, and revert to being susceptible at a rate $1$. Hence at a time $t$ the transition for a node $j$ in the network follows a Markov process with transitions:
\begin{align}
0 \to 1:\mbox{at rate }\beta\sum_{i:(i,j)\in E} X_i(t),\,\,\ 1 \to 0:\mbox{at rate }1. \nonumber
\end{align}

In the SIR model, a node can take $3$ states: susceptible $(0)$, infected $(1)$ or resistant $(e)$. Now an infected node, i.e., with state $(1)$, upon recovering transitions to a resistant state $(e)$ -- subsequently it plays no further part in the dynamics. Thus the transition for node $j$ follows a Markov process:
\begin{align*}
0 \to 1&:\mbox{at rate }\beta\sum_{i:(i,j) \in E,X_i(t)\neq e} X_i(t),\\ 1 \to e&:\mbox{at rate }1.
\end{align*}
Note that state $(e)$ is an absorbing state for a node.

\emph{External infection}: Our model for external infections is the same as in \cite{gopban11}.  We formally model the external infection as a time-varying vector $L(t)=\{L_i(t):i \in V\}$, where $L_i(t)\geq 0$ is the \emph{extra} rate (i.e., in addition to any intrinsic infection rate) at which the external source tries to infect node $i$ at time $t$. We define the \emph{external virulence} as $||L(t)||_1 \leq \mu$ -- one way to visualize this is that the external source's infection times follow a Poisson process with rate $\mu$, and during each infection event (say at time $t$), the external source infects a single node $j$ with probability $\frac{L_j(t)}{\mu}$. The external infection-rate vector $L(t)$ (and the virulence $\mu$) can vary with time $t$ and can depend on the state of the network $S(t)$.

Our external infection model generalizes several models for long-range infection spreading, in particular:
\begin{itemize}[nolistsep,noitemsep]
\item $\mu=0$ reduces to the underlying SIS/SIR process.
\item \emph{Static long-range edges}: $\mu$ is the number of additional edges added; $L_i(t)$ at time $t$ is the number of long-range edges incident on node $i$ that have an infected node at the other end.
\item \emph{Dynamic long-range edges}: Same as above, but the set of additional edges can be changed over time.
\item \emph{Mobile agents}: One/several mobile agents spread the infection (and which can move arbitrarily over the graph). This can also model targeted advertising, giving `special offers' for individuals, etc. 
\item \emph{Broadcasting with bandwidth constraints}: An external source with \emph{bandwidth} $\mu$, which can be shared across any set of nodes. Such a model can be used for broadcast advertising (eg., TV/magazine ads), or dissemination of software updates from a central server, etc.
\end{itemize}
Note that we do not claim our results are the tightest possible under all models of infection via external sources. Some of the above have been studied before, and it is sometimes non-trivial to analyze particular models of external infection, such as additional long-range links \cite{chakrabarti2008epidemic} or agents performing random walks on the graph \cite{DraiefGanesh}. However, our model and analysis does capture many of the salient features of these models, reproducing some existing results, and more importantly, helping characterize many new settings. At a higher level, our work suggests that such models for external infection exhibit a certain dichotomy -- external infection sources either do not affect the phase transition points (if $\mu$ is insufficient), or if they do change the threshold (i.e., if $\mu$ is large enough), then they do so even if the strategy is random.

\noindent \textbf{Notation:} We denote $\mathbb{Z}_0$ to be the collection of non-negative integers, and $\mathbb{R}_0$ to be the real numbers. Further, we use the Landau notation ($O$, $\Theta$, $\Omega$) to describe the growth rate of functions.  We use $\{ \geq_{st}, \leq_{st} \}$ to denote stochastic dominance relations between random variables; specifically $Y \geq_{st} X$ implies that $\PP[Y > r] \geq \PP[X > r]$ for all $r.$

\section{Main Results and Discussion}
\label{sec:results}

We now state our main results, and discuss their implications for different graphs, and different external agent models. We present the results and discussions here, with brief proof outlines. In Section \ref{sec:analysis}, we discuss in more detail some of the technical novelty of our analysis. The complete proofs are presented in the appendix.

\subsection{Subcritical SIS Epidemic}
\label{ssec:SISconstant}

For an SIS epidemic $\mathbf{X}(t)=\{X_i(t): i \in V\}$, the metric of interest is the \emph{extinction time}, given by: 
$$T_{SIS}=\inf\{t: \mathbf{X}(t)=\mathbf{0}\}.$$ 
Note that this definition remains consistent under our model of external agents, as we define the external infection vector $\mathbf{L}(t)$ to be $\mathbf{0}$ when $\mathbf{X}(t)=\mathbf{0}$. Note also that the distribution of $T_{SIS}$ depends on the initial condition $\mathbf{X}(0)$.

Recall that the subcritical regime is one wherein the extinction time of an epidemic is small (usually $O(1)$ or $O(\log n)$). Our main result for subcritical SIS epidemics in the presence of external infection agents is as follows:

\begin{theorem}
\label{thm:UBSIS}
Given an SIS epidemic with intrinsic infection rate $\beta$ on a graph $G$ with maximum degree $d_{max}$, if the initial number of infected nodes is $O(1)$ (arbitrarily chosen) and $\beta d_{max}<1$, then for any external infection strategy with rate $\mu=O(1)$, we have:
$$\mathbb{E}[T_{SIS}]=O(1).$$
\end{theorem}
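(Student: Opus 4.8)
The plan is to reduce the whole question to the absorption time of a one‑dimensional birth–death chain that is subcritical precisely because $\beta d_{\max}<1$. Write $N(t)=|\{i:X_i(t)=1\}|$ for the number of infected nodes, so that $T_{SIS}=\inf\{t:N(t)=0\}$, and note that the only transitions of $N(t)$ are $\pm 1$, with $0$ absorbing (since $\mathbf{L}\equiv\mathbf{0}$ when $\mathbf{X}=\mathbf{0}$). The crucial point is a pair of \emph{rate} bounds that hold no matter how the external strategy behaves (even if it is history‑dependent and adversarial): whenever $N(t)=k\ge 1$, the total rate of $0\to 1$ transitions is at most $\beta d_{\max}k+\mu$ — the intrinsic part equals $\beta$ times the number of edges between the infected set and the susceptible set, which is at most $\beta d_{\max}k$, and the external part is at most $\mu$ by the virulence constraint — while the total rate of $1\to 0$ transitions is exactly $k$, since each infected node is cured at rate $1$ and the external agents cannot act on cures.

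Next I would construct, on a common probability space, a birth–death chain $Z(t)$ on $\mathbb{Z}_0$ with $Z(0)=N(0)$, absorbing at $0$, having up‑rate $b_k=\beta d_{\max}k+\mu$ and down‑rate $a_k=k$ in each state $k\ge 1$, coupled so that $N(t)\le Z(t)$ for all $t$. This is the standard monotone coupling: on the set $\{N=Z=k\}$ one realizes every up‑jump of $N$ as an up‑jump of $Z$ (possible because $N$'s up‑rate is $\le b_k$) and every down‑jump of $Z$ as a down‑jump of $N$ (their down‑rates both equal $k$), and since all jumps are of size $1$ the order $N\le Z$ cannot be broken before the two processes meet again. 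Hence $Z(\tau_0)=0\Rightarrow N(\tau_0)=0$ where $\tau_0:=\inf\{t:Z(t)=0\}$, so $\mathbb{E}[T_{SIS}]\le\mathbb{E}_{N(0)}[\tau_0]$.

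It then remains to bound $\mathbb{E}_{N(0)}[\tau_0]$. For a birth–death chain, letting $\gamma_k$ be the expected time to pass from $k$ to $k-1$ and conditioning on the first jump gives $\gamma_k=\tfrac1{a_k}+\tfrac{b_k}{a_k}\gamma_{k+1}=\tfrac1k+\bigl(\beta d_{\max}+\tfrac{\mu}{k}\bigr)\gamma_{k+1}$, so $\gamma_k=\sum_{j\ge 0}\tfrac1{k+j}\prod_{i=0}^{j-1}\bigl(\beta d_{\max}+\tfrac{\mu}{k+i}\bigr)$ and $\mathbb{E}_k[\tau_0]=\sum_{\ell=1}^{k}\gamma_\ell$. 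The ratio of successive summands of the series for $\gamma_k$ tends to $\beta d_{\max}<1$ as $j\to\infty$, uniformly in $k$; separating the first $O\!\bigl(\mu/(1-\beta d_{\max})\bigr)$ factors (each $O(1)$) from the geometric tail shows $\gamma_k\le C$ for a constant $C=C(\beta d_{\max},\mu)$ independent of $k\ge 1$. Therefore $\mathbb{E}[T_{SIS}]\le\mathbb{E}_{N(0)}[\tau_0]\le C\cdot N(0)=O(1)$, using the hypothesis $N(0)=O(1)$ (the implied constant depending on $1-\beta d_{\max}$ and $\mu$).

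The only genuinely delicate points are: (i) justifying the domination cleanly — the subtlety is that $N(t)$ is \emph{not} a Markov chain, since the external rates depend on the network history, so the coupling must be phrased via instantaneous jump rates rather than transition kernels; and (ii) bookkeeping of constants — because $\mu$ is only $O(1)$, not necessarily small, the chain $Z$ has negative drift only once $k\gtrsim\mu/(1-\beta d_{\max})$, and one must observe that the finitely many lower states add only a further $O(1)$ to the expected absorption time (which is exactly what splitting the series above accomplishes). Everything else is routine. As an alternative to the first‑passage computation one could exhibit a Lyapunov function $V$ with $V(0)=0$ and $\mathcal{A}V\le -1$ off $0$ for the generator $\mathcal{A}$ of $Z$ and invoke $\mathbb{E}_k[\tau_0]\le V(k)$, but constructing such a $V$ explicitly (it must be more convex than linear on the small states) is no simpler than the direct estimate.
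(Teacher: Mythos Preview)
Your proposal is correct and follows essentially the same strategy as the paper: stochastically dominate $|\mathbf{X}(t)|$ by the birth--death chain $Z(t)$ with up-rate $\beta d_{\max}k+\mu$ and down-rate $k$, then show that this chain is absorbed at $0$ in $O(1)$ expected time when $\beta d_{\max}<1$ and $\mu=O(1)$.

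The one difference worth noting is in how the absorption time of $Z$ is bounded. The paper passes to the embedded discrete-time jump chain and then introduces a ``virtual'' $0\to 1$ transition to make the chain ergodic, reading off $\mathbb{E}[T_{1,0}]$ from the stationary probability $\tilde\pi(0)$ (their Ehrenfest-urn lemma); this yields the same product $\prod_{i=1}^k\bigl(\beta d_{\max}+\tfrac{\mu}{i}\bigr)$ that you encounter, and they bound the resulting sum by $\sum_k(\beta d_{\max})^k(k+1)^{\mu/(\beta d_{\max})}$, a finite moment of a geometric variable. You instead work directly in continuous time with the first-passage recursion $\gamma_k=\tfrac{1}{a_k}+\tfrac{b_k}{a_k}\gamma_{k+1}$ and a ratio-test / tail-splitting argument. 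Both routes are standard birth--death computations and give the same conclusion; your version is arguably more elementary (no auxiliary ergodic chain), while the paper's ergodic embedding is set up so that the same lemma can be reused verbatim for the supercritical lower bound and the $\mu$-scaling refinements later in the paper.
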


Prior work for the intrinsic SIS epidemic (i.e., without external agents) reveals that the subcritical threshold for $\beta$ is related to $\frac{1}{\lambda_1}$, where $\lambda_1$ is the largest eigenvalue (or spectral radius) of the adjacency matrix. This fact, though earlier known empirically and via mean-field approximations, was formally established by Ganesh et al. \cite{massganesh05}, who showed that if $\beta\lambda_1<1$, then $\EE[T_{SIS}]=O(\log n)$\footnote{Note that this result does not make any restrictions on the size of the initial infected set. The $\log n$ factor is then inevitable -- essentially, it is the expected time for $n$ isolated infected nodes to recover.}. Note that for any graph, $d_{avg}\leq\lambda_1\leq d_{\max}$ (where $d_{avg}$ is the average degree). Thus for $d$-regular graphs, the spectral radius is equal to the maximum degree; moreover, for any graph, $\beta d_{max} < 1$ is a sufficient condition for the sub-critical regime.

What Theorem \ref{thm:UBSIS} shows is that under constant external infection (i.e., $\mu=O(1)$), $\beta d_{max} < 1$ remains a sufficient condition for sub-critical epidemics -- as long as the initial infected set is of size $O(1)$, the epidemic has $O(1)$ extinction time, under an \emph{arbitrary choice of the initial infected nodes}, and \emph{arbitrary external infection strategies} (including omniscient and adversarial strategies).

\begin{proof}[Proof outline]
We first embed the multi-dimensional Markov chain in a suitable $1$-dimensional Markov chain -- we do this by considering the total number of infected nodes in the network and embedding it in a Markov chain via stochastic domination arguments. Similar techniques were used to show supercritical behavior in \cite{massganesh05, bbcs05} -- however the resulting Markov chains in those works have an absorbing state at $0$ (i.e., if there are no infected nodes in the network). A critical contribution of our work is to show that it is possible to further embed the process in an \emph{ergodic Markov Chain}, by adding some virtual transitions, while preserving the quantities of interest (in this case, the time to absorption). This technique allows us to easily compute \emph{closed-form bounds for the absorbing time},even under complex external-infection policies. The complete proof is given in Section \ref{ssec:SISsubcriticalproofs} in the Appendix.
\end{proof}

Comparing with the bounds for the intrinsic infection raises the following natural question -- what happens in settings where $\mu$ is still $O(1)$, but $d_{\max}^{-1}\ll\beta <\lambda_1^{-1}$ (i.e., $\beta$ is orderwise greater than $d_{\max}^{-1}$, but below the subcriticalilty threshold). We focus on this in the next section where we show that such a setting does in fact make a difference -- it leads to a regime where the resulting epidemic has a \emph{polynomial lifetime}.

\subsection{Critical SIS Epidemic}
\label{ssec:SIScritical}

Suppose now $\mu$ is still $O(1)$, but $d_{\max}^{-1}\ll\beta <\lambda_1^{-1}$ -- in other words, it is not large enough to  escape the subcritical regime without external aid, but larger than the sufficient condition in Theorem \ref{thm:UBSIS} for $O(1)$ extinction time with external aid. The question now is whether this regime is also subcritical, or if there is a fundamental shift in the epidemic lifetime in this region brought about by the external agents. This is particularly important in settings where $d_{\max}$ and $\lambda_1$ differ greatly -- for example, graphs with a \emph{power law} degree distribution.

In this section we show that there is in fact a fundamental shift in this regime -- for a large class of graphs, the presence of the external source causes an orderwise change in the epidemic lifetime, which we characterize in a tight manner. First, existing results for subcritical epidemics without external aid (in particular, Theorem $3.1$ in \cite{massganesh05}) can be bootstrapped to derive the following upper bound on the extinction time:

\begin{theorem}
\label{thm:SISlambda}
For an SIS epidemic with $\beta \lambda_1 < 1$, originating from any set of initially infected nodes, and aided by any external infection strategy with $\mu =O(1)$, we have $$\mathbb{E}[T_{SIS}]=n^{O(1)},$$ i.e., the lifetime is at most polynomial in the number of nodes.
\end{theorem}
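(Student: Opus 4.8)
The plan is to \emph{bootstrap} the intrinsic bound of Ganesh et al.\ \cite{massganesh05} via a renewal/restart argument: partition time into windows of length $\Theta(\log n)$, and show that within each window the epidemic goes \emph{globally} extinct with probability at least $1/\mathrm{poly}(n)$, regardless of the current network state and regardless of the (omniscient, adversarial) external strategy. Since extinction is absorbing (the external source vanishes once $\mathbf{X}(t)=\mathbf{0}$), the number of windows needed is then stochastically dominated by a geometric random variable, which immediately yields $\EE[T_{SIS}]=n^{O(1)}$.

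In detail, I would first record the precise form of the intrinsic estimate: by Theorem $3.1$ of \cite{massganesh05}, an SIS epidemic with $\beta\lambda_1<1$ and \emph{no} external infection, started from \emph{any} initial infected set, goes extinct in expected time at most $\tfrac{\log n + 1}{1-\beta\lambda_1}=:c\log n$. (Here I treat $1-\beta\lambda_1$ as bounded below by a positive constant; more generally the argument needs only $1-\beta\lambda_1=\Omega(n^{-O(1)})$, which is all that a polynomial-lifetime statement can hope to permit.) Fix the window length $\ell:=2c\log n$, and consider the process at the start of an arbitrary window, in some state $A$ with $|A|\le n$. Over $[0,\ell]$, couple the evolution using disjoint families of Poisson clocks: the intrinsic infection/cure clocks on the one hand, and the external source's clock on the other; since the total external rate $\|L(t)\|_1$ is at most $\mu$ at every instant, the external clock is dominated by a homogeneous rate-$\mu$ Poisson process. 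Let $E_1$ be the event that the \emph{intrinsic-only} epidemic started from $A$ is extinct by time $\ell$, and $E_2$ the event that this rate-$\mu$ Poisson process has no arrival in $[0,\ell]$. Markov's inequality gives $\PP(E_1)\ge 1/2$, while $\PP(E_2)\ge e^{-\mu\ell}=n^{-2c\mu}$; since $E_1$ and $E_2$ are measurable with respect to disjoint clock families they are independent, so $\PP(E_1\cap E_2)\ge \tfrac12 n^{-2c\mu}=:p$. On $E_2$ the true externally-aided process coincides with the intrinsic process on $[0,\ell]$, so on $E_1\cap E_2$ the true process is extinct by the end of the window.

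Finally I would close with the standard renewal bookkeeping: the displayed bound holds at the start of \emph{every} window with probability at least $p$ conditioned on the entire history, so the index of the first successful window is stochastically dominated by $\mathrm{Geom}(p)$, whence $\EE[T_{SIS}]\le \ell\cdot\tfrac1p = \tfrac{2c\log n}{p}=4c\,n^{2c\mu}\log n = n^{O(1)}$, using $\mu=O(1)$.

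The one place genuine care is needed is the coupling step: one must set up a graphical representation in which (i) the external infection process, whose rates $L_i(t)$ are chosen adversarially as a function of the full history, is honestly dominated by an \emph{independent} rate-$\mu$ Poisson process, and (ii) conditioning on ``no external arrival in this window'' truly reduces the dynamics to the intrinsic ones without biasing the intrinsic clocks. Both are routine once phrased through the standard graphical construction of interacting particle systems, but they carry the substance of the proof; everything else is Markov's inequality and geometric-trials counting. (This is essentially the ``ergodic Markov chain embedding'' alluded to after Theorem \ref{thm:UBSIS}, here used only to extract a polynomial rather than a constant bound.)
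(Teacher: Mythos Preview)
Your proposal is correct and follows essentially the same bootstrap-and-regenerate approach as the paper: both show that in each $O(\log n)$ time interval the external source fails to fire with probability $n^{-O(1)}$ while the intrinsic epidemic dies out, and then invoke a geometric-trials bound (the appendix version regenerates at the external source's exponential-$\mu$ clock ticks and uses Jensen on $\EE[e^{-\mu T_{SIS}^i}]$ rather than fixed windows plus Markov, but this is cosmetic). One small correction: your closing parenthetical is off --- this argument is \emph{not} the ergodic Markov-chain embedding of Section~\ref{ssec:mc}; that is a separate technique used for Theorems~\ref{thm:UBSIS}, \ref{thm:muScales}, and~\ref{thm:LBSISgammaI}.
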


\begin{proof}[Proof Outline]
The main intuition behind this result is that within any time interval that is $O(\log n)$, the external infection has a $n^{-\lambda}$ (for some $\lambda=\Theta(1)$) probability of failing to infect even a single node. Furthermore, existing results indicate that for any initial set of infected nodes, a subcritical epidemic has $O(\log n)$ expected extinction-time. Now we can combine these two facts to create a stochastically dominating process, wherein whenever the external agent successfully infects a node, we instead assume \emph{all} the nodes in the system are infected -- however by our previous observation, this can happen at most a polynomial number of times. The complete proof is given in Section \ref{ssec:SIScriticalproofs} in the appendix.
\end{proof}

Thus we know that the epidemic can last for a time at most polynomial in $n$ - however can this be achieved? The answer to this is provided in the following result:

\begin{theorem}
\label{thm:SISpolyLB}
For SIS epidemics on graphs with $d_{max}=\Omega(n^\alpha)$ for some $\alpha>0$, suppose $\beta=\Omega\left(n^{-\alpha(1-\epsilon)}\right)$ for some $\epsilon >0$. Then, for any initial infected set, there exists external infection policies with $\mu=O(1)$ such that, for some $\lambda=\Theta(1)$: $$\mathbb{E}[T_{SIS}]=\Omega(n^{-\lambda}).$$
\end{theorem}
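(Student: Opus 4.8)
The plan is to exhibit one explicit external policy and reduce the analysis to a star. Let $v$ be a vertex with $\deg v = d_{max} = \Omega(n^\alpha)$, and let the external agent place all of its mass on $v$ whenever the epidemic is alive, i.e.\ $L_v(t)=\mu$ and $L_i(t)=0$ for $i\neq v$, where $\mu$ is a constant of our choosing (say $\mu=1$). Put $m:=\beta d_{max}$; by the hypotheses $\beta=\Omega(n^{-\alpha(1-\epsilon)})$ and $d_{max}=\Omega(n^\alpha)$ we get $m=\Omega(n^{\alpha\epsilon})\to\infty$, and this polynomially large quantity will drive the bound. By the standard monotone graphical coupling for SIS, $\mathbf X(t)$ restricted to $v$ and its neighbours dominates the SIS process run on the star $S$ with centre $v$ and leaf set $N(v)$, under the same external policy; so it suffices to lower bound $\EE[T_{SIS}]$ on $S$. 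If the initially infected set lies outside $S$, I would first note that the initially infected node stays infected for an $\mathrm{Exp}(1)$ time, so the epidemic --- hence the external clock on $v$ --- is alive throughout, and therefore $v$ gets infected before extinction with probability at least $\mu/(1+\mu)=\Omega(1)$; by the strong Markov property it then remains to bound the star's extinction time started from ``$v$ infected''.

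On $S$, track the Markov chain $(b(t),k(t))$ where $b\in\{0,1\}$ is the state of $v$ and $k$ the number of infected leaves, so extinction is $(0,0)$. The argument hinges on two estimates. \emph{(i) Reaching a healthy band.} While $b=1$, $v$ infects susceptible leaves at total rate $\beta(d_{max}-k)=\Theta(m)\gg1$ as long as $k$ is not a constant fraction of $d_{max}$; since $v$ stays infected for an $\mathrm{Exp}(1)$ time, a biased-random-walk comparison together with a renewal argument over the susceptible/infected spells of $v$ shows that from any state with $v$ infected the chain reaches the band $k\asymp m$ before extinction with probability at least a constant $c_1>0$ (from $(1,0)$ one first passes through the step where $v$ infects a leaf, which beats $v$'s recovery with probability $m/(m+1)$). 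Combined with the previous paragraph, the chain reaches $k\asymp m$ before extinction with probability $\Omega(1)$ from \emph{any} initial infected configuration. \emph{(ii) Escaping the band is hard.} While $b=0$, $k$ performs a pure death at rate $k$ whereas $v$ is re-infected at rate $\beta k+\mu\ge\mu$; hence from $k=j$ the probability that $k$ reaches $0$ before $v$ is re-infected is at most $\prod_{\ell=1}^{j}\frac{\ell}{\ell+\mu}=\Theta(j^{-\mu})$ by a ruin computation. Moreover, whenever $k$ is below a point $k^*=\Theta(m)$ its drift is towards $k^*$ (the infection pressure from the repeatedly re-infected centre outweighs leaf recoveries), so $k$ concentrates around $k^*$ and drops below $m/8$ only with probability $e^{-\Omega(m)}$ per unit time.

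To conclude, set $t^*=\varepsilon_0 m^{\mu}$ for a small constant $\varepsilon_0$. In $[0,t^*]$ the centre is cured $O(t^*)$ times, so there are $O(t^*)$ susceptible spells; a union bound over these, using (ii), shows that the probability that $k$ ever hits $0$ in $[0,t^*]$ after first reaching $k\asymp m$ is at most $O(t^*)\,e^{-\Omega(m)}+O(t^*)\,\Theta(m^{-\mu})<\tfrac12$ for $\varepsilon_0$ small and $n$ large. Hence, conditioned on reaching $k\asymp m$ (probability $\Omega(1)$ by (i)), the epidemic survives past $t^*$ with probability at least $\tfrac12$, so $\EE[T_{SIS}]=\Omega(t^*)=\Omega(m^{\mu})=\Omega(n^{\alpha\epsilon\mu})$, which is the claim with $\lambda=\alpha\epsilon\mu=\Theta(1)$ (e.g.\ $\lambda=\alpha\epsilon$ for $\mu=1$). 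This last step is also where the paper's ``embed in an ergodic Markov chain'' device would enter: instead of the union bound, one closes up the one-dimensional chain tracking $k$ at $0$ with virtual transitions and reads off a closed-form bound for the time to absorption.

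The step I expect to be the main obstacle is (ii) --- specifically, ruling out a \emph{gradual} drift to extinction through a sequence of moderately unlucky excursions instead of one catastrophic downcrossing. Making this precise requires quantifying the mean reversion of $k$ towards $k^*=\Theta(m)$ sharply enough (via a Lyapunov/Foster argument on $(b,k)$, or a careful comparison with a birth--death chain on $k$ after averaging out the fast toggling of $b$) that the chain essentially never visits $k\ll m$ without first passing through the $e^{-\Omega(m)}$ event, so that the single-downcrossing bound $\Theta(m^{-\mu})$ is genuinely the operative rate of extinction. Carrying the constants through (i)--(ii) is also needed to ensure $\lambda$ stays $\Theta(1)$.
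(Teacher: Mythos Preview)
Your proposal follows the same high-level route as the paper: concentrate the external mass on a maximum-degree vertex, reduce to a star by monotone coupling, note the hub gets hit first with probability $\mu/(1+\mu)$, and then show the hub is re-infected before its leaves all recover with probability $1-n^{-\Omega(1)}$. Your ruin computation in step~(ii), $\prod_{\ell=1}^{j}\frac{\ell}{\ell+\mu}=\Theta(j^{-\mu})$, is essentially the one the paper carries out.

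Where the paper differs is exactly at the step you flag as the obstacle. It does \emph{not} track $(b,k)$, maintain a band $k\asymp m$, or union-bound over susceptible spells in $[0,t^*]$. Instead it uses an under-dominating regeneration: start with only the hub infected; when the \emph{first} recovery (of any node) occurs, force it to be the hub's recovery --- this under-dominates and, by a one-shot product calculation (Lemma~\ref{lem:star}, not a renewal argument), leaves $\tilde n=\Omega(n^{\alpha\epsilon/3})$ infected leaves with probability $1-O(n^{-\alpha\epsilon/3})$. When the hub is subsequently re-infected, \emph{restart the under-dominating process with only the hub infected and all leaves susceptible}. The epochs are now i.i.d., so the number of epochs before extinction is geometric with failure probability $O(n^{-\alpha\epsilon/3})+O(\tilde n^{-\mu})=n^{-\Omega(1)}$, giving $n^{\Omega(1)}$ expected epochs of $\Omega(1)$ duration each. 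The ``gradual drift through moderately unlucky excursions'' worry disappears because you discard all surviving leaves at every restart; there is no Lyapunov/Foster step to carry out.

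Two minor remarks. First, the paper does \emph{not} invoke the ergodic-embedding device here --- that machinery is used only for Theorems~\ref{thm:UBSIS}, \ref{thm:muScales}, and~\ref{thm:LBSISgammaI}. Second, your step~(i) can be correspondingly simplified: rather than reaching $k\asymp\beta d_{\max}$ via renewals over the hub's on/off spells, it suffices (as in Lemma~\ref{lem:star}) to count infected leaves at the instant of the \emph{first} recovery, which is a single closed-form product.
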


The implication of this theorem is as follows: if we consider any graph with sufficiently large maximum degree (i.e., scaling polynomially with $n$), then an intrinsic infection rate of $\beta >> d_{max}^{-1}$ coupled with a constant external infection is sufficient to ensure that the epidemic lifetime is at least polynomial in $n$. Now the first eigenvalue of the adjacency matrix satisfies $\max\{d_{avg},\sqrt{d_{\max}}\}\leq\lambda_1\leq d_{\max}$ -- thus it can be order-wise less than $d_{\max}$. For example, for the star graph on $n$ nodes, we have $\lambda_1=\sqrt{n}$ while $d_{\max}=n$. The above theorem now implies that for such graphs, it is possible to choose an intrinsic infection rate such that $d_{max}^{-1}\ll\beta<\lambda_1^{-1}$, and then design an appropriate external infection strategy to ensure polynomial lifetime. Note that this is not unique to the star graph -- for many graphs, like power-law and other heavy-tailed degree distributions, it is known that the maximum degree is polynomial in number of nodes, and further, there is an order-wise difference between $\lambda_1$ and $d_{\max}$ (see also \cite{massganesh05}). The above theorem implies that constant external-infection can cause a polynomial-lifetime critical regime in such networks.

\begin{proof}[Proof outline]
The proof proceeds in three steps. First, given any initial infected set, the external source can infect the node with degree $d_{\max}$ with at least constant probability (by focusing its entire virulence at that node). Next, by coupling arguments, the process subsequently can be stochastically dominated by considering an SIS process on a single star of degree $d_{\max}$. The main technical ingredient is showing that the SIS infection on a star on $d_{\max}=n^{\alpha}$ nodes lasts for a time that is at least polynomial in $n$ \emph{if aided by an external source of at least constant virulence}; note that this is not true if the external source is absent (from existing results in \cite{massganesh05, bbcs05}). The complete proof is given in Section \ref{ssec:SIScriticalproofs} in the appendix.
\end{proof}

\subsection{Supercritical SIS Epidemic}
\label{ssec:SISsuper}

The results in the previous two sections indicates that $O(1)$ external infection is insufficient to push an epidemic from the subcritical to the supercritical regime - it can at most increase the lifetime to a polynomial function of $n$, but not to an exponential. The question we address now is what conditions on $\mu$ are sufficient to ensure exponential epidemic lifetimes -- i.e. $\EE[T_{SIS}]=e^{n^{\Omega(1)}}$. 

More specifically, we want to design an external-infection scheme that guarantees that epidemic is in the supercritical regime; ideally, this policy should work for any graph/initial infected set/intrinsic infection rate $\beta$. The results from the previous section show that we need $\mu$ to grow with $n$ to have any chance of reaching the supercritical regime. In fact, it turns out that just scaling with $n$ is not sufficient, as we prove the following refinement of Theorem \ref{thm:UBSIS}, that demonstrates that even if $\mu$ grows slowly with $n$, it can not escape sub-exponential extinction times:

\begin{theorem}
\label{thm:muScales}
For the SIS epidemic on graph $G$, if $\beta d_{max} < 1$ and the initial infected set is $O(1)$, then:
\begin{enumerate}[nolistsep,noitemsep]
\item $\mu=o\left(\frac{\log n}{\log \log n}\right)$\footnote{Recall $f(n)=o(g(n))$ means that $\lim_{n\rightarrow\infty}\frac{f(n)}{g(n)}=0$.} results in sub-polynomial lifetime, i.e., $\mathbb{E}[T_{SIS}]=n^{o(1)}$.
\item $\mu=O\left(\mbox{polylog}(n)\right)$ results in sub-exponential lifetime, specifically $\mathbb{E}[T_{SIS}]=O(e^{\mbox{polylog}(n)})$.
\end{enumerate} 
\end{theorem}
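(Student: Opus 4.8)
The plan is to run the same one-dimensional embedding used for Theorem~\ref{thm:UBSIS}, but this time keeping explicit track of the dependence on $\mu$. Set $\rho := \beta d_{\max}$, which we take to be a fixed constant in $(0,1)$ (consistent with the $O(1)$ bound of Theorem~\ref{thm:UBSIS}), and let $Y(t) := \sum_i X_i(t)$ be the number of infected nodes. Whenever $Y(t) = k \ge 1$: the total rate of new intrinsic infections is at most $\beta d_{\max} k = \rho k$ (each of the $k$ infected nodes infects each neighbour at rate $\beta$ and has at most $d_{\max}$ neighbours); the total external-infection rate is $\|L(t)\|_1 \le \mu$, \emph{uniformly over every external policy}, and each such event increases $Y$ by at most one; and the total recovery rate is exactly $k$. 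Consequently $Y(\cdot)$ is stochastically dominated, via the usual monotone coupling, by the birth--death chain $Z(\cdot)$ on the non-negative integers with birth rate $b_k = \rho k + \mu$ and death rate $d_k = k$ for $k \ge 1$, with $0$ absorbing. Hence $\EE[T_{SIS}] \le \EE[\tau_Z]$, where $\tau_Z$ is the absorption time of $Z$ started from $Y(0)$; and since $Y(0) = O(1)$, this is of the same order as $h_1 := \EE[\text{absorption time of } Z \text{ from state } 1]$ (the first-passage times from $i$ down to $i-1$ being non-increasing in $i$ over the relevant range), so it suffices to bound $h_1$.

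Next I would obtain a closed form for $h_1$ by exactly the device highlighted in the outline of Theorem~\ref{thm:UBSIS}: augment $Z$ with a virtual transition $0 \to 1$ at an arbitrary fixed rate $\lambda_0$ to obtain an \emph{ergodic} birth--death chain, solve detailed balance for its stationary law $\pi$ (which gives $\pi_{k+1}/\pi_k = b_k/d_{k+1} = (\rho k + \mu)/(k+1)$, hence $\pi_k/\pi_1 = \big(\prod_{j=1}^{k-1}(\rho j + \mu)\big)/k!$), and combine $\pi_1 = \lambda_0 \pi_0$ with the renewal--reward identity $\pi_0 = (1/\lambda_0)/(1/\lambda_0 + h_1)$. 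This yields the exact expression
\[
h_1 \;=\; \sum_{k \ge 1} \frac{\prod_{j=1}^{k-1}(\rho j + \mu)}{k!},
\]
which is the $\mu$-explicit refinement of the $O(1)$ bound of Theorem~\ref{thm:UBSIS}.

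It then remains to estimate this series as a function of $\mu$. The ratio of consecutive terms is $(\rho k + \mu)/(k+1)$, which exceeds $1$ for $k \lesssim \mu/(1-\rho)$ and drops below the constant $(1+\rho)/2 < 1$ once $k \ge 2\mu/(1-\rho)$; so the sum is controlled by its peak near $k^\star := \mu/(1-\rho) = \Theta(\mu)$ plus a geometric tail. Bounding the head by $O(\mu)$ times the maximal term, and estimating that term via $\prod_{j < k^\star}(\rho j + \mu) \le (\rho k^\star + \mu)^{k^\star - 1} = (k^\star)^{k^\star - 1}$ together with Stirling's formula for $k^\star!$, gives $h_1 = e^{O(\mu)}$, indeed $e^{c\mu}$ up to $\mathrm{poly}(\mu)$ factors with $c$ a constant depending only on $1/(1-\rho)$. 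Hence $\EE[T_{SIS}] = e^{O(\mu)}$, and both claims follow at once: if $\mu = o(\log n/\log\log n)$ then in particular $\mu = o(\log n)$, so $\EE[T_{SIS}] = e^{o(\log n)} = n^{o(1)}$; and if $\mu = O(\mathrm{polylog}(n))$ then $\EE[T_{SIS}] = e^{O(\mathrm{polylog}(n))} = O(e^{\mathrm{polylog}(n)})$.

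The conceptual steps, the one-dimensional domination uniform over external policies and the ergodic-embedding/stationary-distribution computation, are essentially those of Theorem~\ref{thm:UBSIS}; the only place that needs genuine care is the final asymptotic analysis of the series, where one must pair the factorial against the numerator product correctly so as to see that the growth is only $e^{\Theta(\mu)}$ (and not $e^{\Theta(\mu^2)}$ or $\mu^{\Theta(\mu)}$), and must keep the $(1-\rho)^{-1}$ factors out of the leading exponent. I expect that estimate to be the main, albeit routine, obstacle.
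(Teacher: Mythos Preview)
Your approach is essentially the paper's: the same dominating birth--death chain $Z$ with rates $b_k=\rho k+\mu$, $d_k=k$, and the same ergodic-embedding device (the paper's Lemmas~\ref{lem:ehrenfest}--\ref{lem:kEhrenfest}) to read off the hitting time from the stationary measure. Your expression for $h_1$ agrees with the paper's series up to index shifts and the cosmetic difference that you stay in continuous time while the paper passes to the embedded jump chain.

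Where you genuinely diverge is in the asymptotic estimate of that series. The paper factors each term as $\rho^k\prod_{i\le k}(1+\mu/(\rho i))\le \rho^k(k+1)^{\mu/\rho}$, bounds $\sum_k \rho^k k^{\mu/\rho}$ by a Gamma integral, and obtains $\EE[T_{SIS}]=O(\mu^{2\mu})=e^{O(\mu\log\mu)}$. You instead locate the peak at $k^\star=\mu/(1-\rho)$, bound the peak term by $(k^\star)^{k^\star-1}/k^\star!=e^{O(\mu)}$ via Stirling, and control a geometric tail, arriving at the sharper $\EE[T_{SIS}]=e^{O(\mu)}$. Your estimate is correct (the $\mu\log\mu$ contributions from the numerator product and from $k^\star!$ do cancel), and it actually \emph{strengthens} the theorem: with $e^{O(\mu)}$ in hand, part~1 already follows from the weaker hypothesis $\mu=o(\log n)$, not just $\mu=o(\log n/\log\log n)$. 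The $\log n/\log\log n$ threshold in the statement is exactly the artifact of the paper's looser $\mu^{O(\mu)}$ bound.

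One small point to tidy: your parenthetical that the first-passage times $i\to i-1$ are non-increasing in $i$ is true for \emph{this} chain once $\mu\ge\rho$ (since $b_j/d_{j+1}=\rho+(\mu-\rho)/(j+1)$ is then decreasing in $j$), but it is not a general birth--death fact. The paper sidesteps this via Lemma~\ref{lem:kEhrenfest}, which is perhaps the cleaner way to pass from an $O(1)$ initial set to the bound on $h_1$.
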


Thus, in order to achieve exponential epidemic lifetimes, it would appear that we need $\mu$ to be at least polynomial in $n$. However this is impractical for real-world settings -- for example, it implies having a very high advertising budget, or a large number of infected long-distance travelers. Further, such an assumption is somewhat trivial from a mathematical point of view, as essentially the external infection dominates any effects of the intrinsic infection.

However, it turns out we can circumvent this problem in the following way: instead of considering a fixed, large $\mu$, we \emph{allow $\mu$ to scale with the size of the infected population} (upto some maximum size). Note that from the previous theorem, it is clear that we need the maximum size to be polynomial in $n$, however the required external rate over time may be much smaller. This is the situation we consider next; in Section \ref{ssec:examples}, we discuss the relevance of this assumption in various settings.

To state our results, we first need to recap the best known conditions for the supercritical regime for the SIS epidemic without external infection. For this, we need to introduce the generalized isoperimetric constant $\eta(m)$ of a graph $G$ \cite{massganesh05}, defined as follows:
$$\eta(m)=\min_{S\subseteq V:|S|\leq m}\frac{|E(S,S^c)|}{|S|},$$
where $S$ is any subset of nodes (of size $\leq m$, with its complement $S^c=V\setminus S$), and $E(S,S^c)$ is the set of edges with one endpoint in $S$ and one in the complement. The isoperimetric constant captures the notion of a \emph{bottleneck set}: one which has the smallest number of edges exiting it, normalized by the number of nodes in the set. It is related to the notion of expansion on graphs -- in particular, the above definition is equivalent to a $(\frac{m}{n},\eta(m))$-expander \cite{bbcs05}. Further, via Cheeger's inequality, it is also related to the second largest eigenvalue $\lambda_2$ of the adjacency matrix. Finally, it is known \cite{massganesh05} that if $\beta\eta(n^{\alpha})>1$, then for any initial set of infected nodes, we have $\EE[T_{SIS}]=\Omega(e^{n^{\alpha}})$. We now show how this can be improved with external sources.

Suppose we define the set of infected nodes at time $t$ as $I(t)\subseteq V$. Now we have the following sufficient condition for the epidemic to enter the supercritical regime:
\begin{theorem}
\label{thm:LBSISgammaI}
Consider the SIS epidemic on graph $G$: given any $\alpha>0$, suppose the external infection rate $\mu$ scales linearly with the number of infected nodes $|I|$ as $\mu=\gamma|I|$ for $|I|<n^\alpha$. Now if the intrinsic infection rate satisfies $\beta\eta(|I|)+\gamma > 1$, then under an external infection strategy which infects nodes uniformly at random, we have $$\mathbb{E}[T_{SIS}] \geq e^{n^\alpha}.$$
\end{theorem}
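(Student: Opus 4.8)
The plan is to reduce the multi-dimensional SIS dynamics to a one-dimensional birth--death chain tracking $|I(t)|$, establish a strictly positive multiplicative drift for this chain on the range $\{1,\dots,\lfloor n^\alpha\rfloor\}$, and then invoke the standard fact that such a chain takes time exponential in $n^\alpha$ to reach $0$. This follows the same overall strategy used for supercriticality in \cite{massganesh05,bbcs05}; the new ingredient is that the external virulence $\mu=\gamma|I|$ under the uniform-at-random strategy contributes \emph{additively} to the drift, which is exactly what makes $\beta\eta(|I|)+\gamma>1$ (rather than $\beta\eta(|I|)>1$) suffice. Throughout I take $\alpha<1$, which is the relevant regime (otherwise $\eta(n^\alpha)$ is just the global isoperimetric constant).

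Set $M:=\lfloor n^\alpha\rfloor$ and $Z(t):=|I(t)|$. Step one is to show that $Z(t)$ stochastically dominates, pathwise under a coupling, a birth--death chain $Y(t)$ on $\{0,1,\dots,M\}$ with $0$ absorbing, $M$ a reflecting barrier, death rate $d_k=k$, and birth rate $b_k=k\big(\beta\,\eta(M)+\gamma(1-k/n)\big)$ for $1\le k\le M-1$, started from $Y(0)=\min(Z(0),M)$. What makes this domination legitimate despite $Z(t)$ not being Markov is that only two configuration-free facts about $Z$ are needed: conditioned on $Z(t)=k$, the total rate of curing events is \emph{exactly} $k$; and the total rate of infection events that raise $|I|$ is \emph{at least} $b_k$ whenever $k\le M$ --- the intrinsic part is $\beta|E(I,I^c)|\ge\beta\,\eta(|I|)\,|I|\ge\beta\,\eta(M)\,k$ by the definition of the isoperimetric constant (since $|I|\le M$), and the uniform external source hits a susceptible node with probability $(n-k)/n$, contributing $\mu\cdot(n-k)/n=\gamma k(1-k/n)$. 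A standard graphical coupling then keeps $Y(t)\le Z(t)$ for all $t$, so $T_{SIS}=T_0^Z\ge T_0^Y$ pathwise and hence $\EE[T_{SIS}]\ge\EE_{Y(0)}[T_0^Y]\ge\EE_1[T_0^Y]$, the last step by monotonicity of the hitting time of $0$ in the initial state. We cap $Y$ at $M$ precisely because above $n^\alpha$ the external rate vanishes and the drift bound is lost.

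Step two is the drift estimate and hitting-time bound. For $1\le k\le M-1$, $b_k/d_k=\beta\,\eta(M)+\gamma(1-k/n)\ge\beta\,\eta(n^\alpha)+\gamma-\gamma n^{\alpha-1}$; writing $\beta\,\eta(n^\alpha)+\gamma=1+2\delta$ with $\delta>0$ (this is the worst case over $|I|\le M$ since $\eta(|I|)\ge\eta(n^\alpha)$), and using $\alpha<1$ so that $\gamma n^{\alpha-1}\to0$, we get $b_k/d_k\ge1+\delta$ for all $1\le k\le M-1$ once $n$ is large. For a birth--death chain with multiplicative drift at least $1+\delta$ on $\{1,\dots,M-1\}$, the one-step recursion $\EE_k[T_{k-1}^Y]=\tfrac1{d_k}+\tfrac{b_k}{d_k}\EE_{k+1}[T_k^Y]$ (with $\EE_M[T_{M-1}^Y]=1/M$) unrolls to $\EE_1[T_0^Y]\ge(1+\delta)^{M-1}/M$. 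Hence $\EE[T_{SIS}]\ge(1+\delta)^{M-1}/M=e^{\Omega(n^\alpha)}$, i.e.\ the epidemic lifetime is exponential in $n^\alpha$, which is the content of the stated bound. The only assumption used on the initial condition is $Z(0)\ge1$, so the result holds for any nonempty initial infected set.

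I expect the main obstacle to be making the coupling in step one fully rigorous: $Z(t)$ is driven by a possibly history-dependent external policy (here pinned to the uniform strategy, but the cure rates and the isoperimetric lower bound must be argued configuration-by-configuration), and one must be careful about excursions of $Z$ above $n^\alpha$, where the external source is off. The resolution is that the domination needs only the scalar bounds ``cure rate $=k$'' and ``infection rate $\ge b_k$ for $k\le M$'', and that forbidding $Y$ to exceed $M$ confines the argument to the window where these bounds and the drift estimate are valid. The drift computation and the classical birth--death hitting-time estimate are then routine.
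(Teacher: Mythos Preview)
Your proposal is correct and follows essentially the same route as the paper: under-dominate $|I(t)|$ by a one-dimensional birth--death chain on $\{0,\dots,\lfloor n^\alpha\rfloor\}$ with death rate $k$ and birth rate at least $k(\beta\eta(n^\alpha)+\gamma(1-k/n))$, then show the multiplicative drift exceeds $1+\delta$ uniformly on this window (using $\alpha<1$ so that $\gamma n^{\alpha-1}\to 0$), which forces an $e^{\Omega(n^\alpha)}$ hitting time to $0$.

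The only methodological difference is cosmetic: the paper passes to the embedded discrete-time chain and computes $\EE[T_{1,0}]$ via its ``ergodic embedding'' trick (Lemma~\ref{lem:ehrenfest}), obtaining the closed form $\sum_{k}\prod_{i\le k}(\beta\eta(i)+\mu(i)/i)$ and then lower-bounding each factor by $1+\epsilon-\gamma i/n$; you instead stay in continuous time and unroll the standard recursion $\EE_k[T_{k-1}]=1/d_k+(b_k/d_k)\EE_{k+1}[T_k]$ to get $(1+\delta)^{M-1}/M$. These are two derivations of the same product-of-drifts bound, so neither buys anything over the other for this theorem (the ergodic-embedding device is more genuinely useful in the paper's subcritical proofs, where finer control is needed). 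Your handling of the reflecting cap at $M$ and the coupling via the two scalar bounds ``cure rate $=k$'' and ``infection rate $\ge b_k$ for $k\le M$'' matches what the paper does implicitly.
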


We can unravel this result as follows -- to achieve supercriticality, we require:
\begin{itemize}[nolistsep,noitemsep]
\item The external infection-rate scale linearly with the number of infected nodes $|I|$ upto a maximum limit which is at least polynomial in $n$ (though it can be of as small an order as we desire).
\item The slope of the line $\gamma$ must obey $\gamma>1-\beta\eta(n^{\alpha})$: essentially, it must compensate for the gap between the intrinsic rate and the critical threshold.
\item The external infection need not be specifically designed -- in fact, it can be \emph{random}. More specifically, it is sufficient to try to infect each node at a rate $\frac{\mu}{n}$, irrespective of whether it is already infected or not.
\end{itemize}
We give an outline of the proof in Section \ref{ssec:mc}; the complete proof is provided in Section \ref{ssec:SISsupercriticalproofs} of the appendix.

Combining all the previous theorems, we get the following summary for the SIS epidemic with external agents: suppose we have an intrinsic infection with rate $\beta$ which is below the critical threshold, then adding a constant amount of additional external infection will either leave the extinction time unaffected (in a scaling sense, if $\beta d_{\max}<1$), or at best, if designed carefully, improve it to a polynomial in $n$ (if $\beta d_{\max}>>1$). If however, we want to push the epidemic into the supercritical regime, we can do so by making the external rate grow linearly with the number of infected nodes, at a rate equal to how much $\beta$ is below the threshold. The three regimes are summarized in Figure \ref{fig:3reg}. We interpret these results in more detail via several examples in Section \ref{ssec:examples}.
\begin{figure}[!h]
\centering \includegraphics[scale=0.4]{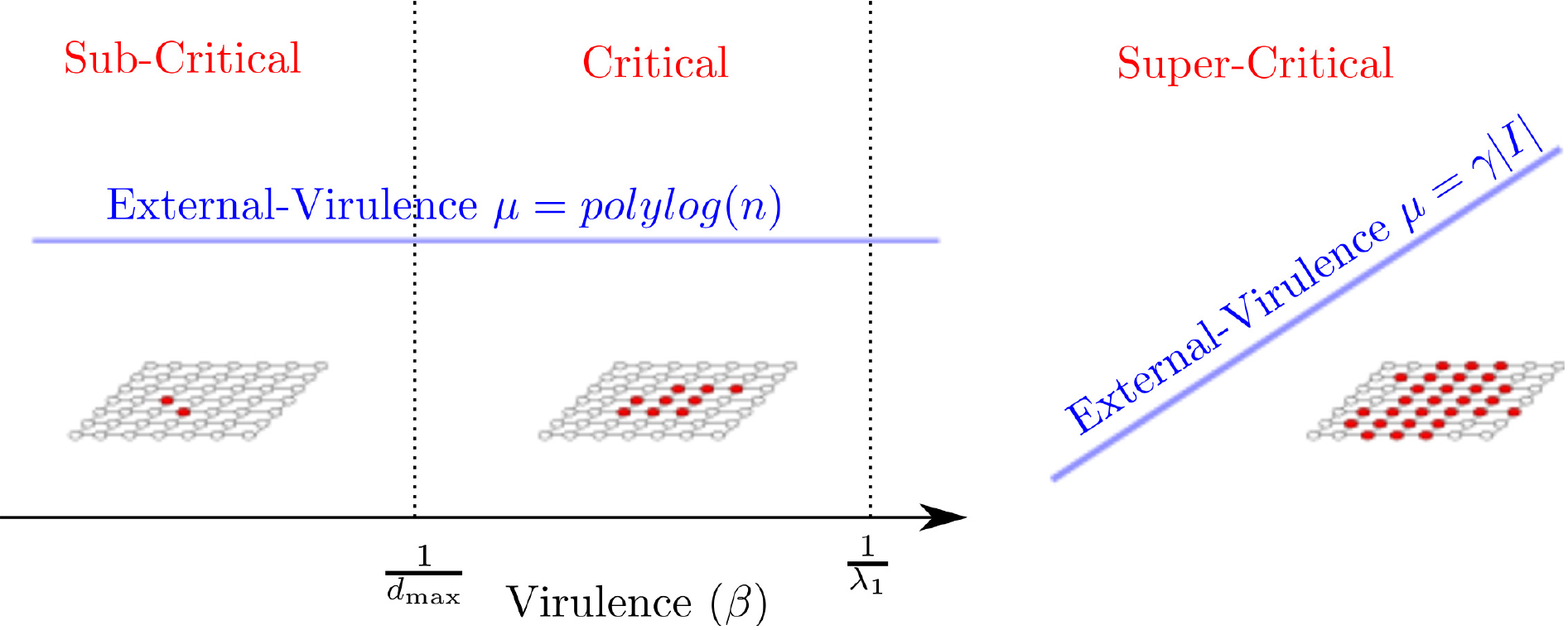}
\caption[SIS Epidemics: $3$ regimes]{Illustration of the $3$ regimes in SIS epidemics with external infection sources: For $\mu=O(\mathrm{polylog}(n))$, we get subcritical and critical regimes, depending on $\beta$; $\mu$ scaling linearly with $|I|$ results in a supercritical epidemic.}
\label{fig:3reg}
\end{figure}

\subsection{Subcritical SIR Epidemic}
\label{ssec:SIR}

Next we turn to the case of the SIR epidemic (see Section \ref{sec:model} for definitions). Recall that in this model, an absorbing state consists of a connected set of nodes in state $(e)$, and the rest in state $(0)$. We define the time to extinction $T_{SIR}$ of the epidemic as before. However a more important object of study here is the number of nodes \emph{eventually infected} by the epidemic -- we denote this as $N_{SIR}=|I(T_{SIR})|$. 

We assume throughout that the initial number of infected nodes is $O(1)$, and they are chosen arbitrarily -- this can be generalized, but since we are interested in the number eventually infected, this assumption is appropriate. We also note that there are two models which are referred to as the SIR model in literature -- the one we define in Section \ref{sec:model}, wherein the nodes transition from state $(1)$ to $(e)$ according to an $Exponential(1)$ clock, and an alternate model wherein the nodes transition from $(1)$ to $(e)$ after a \emph{deterministic time} (usually $1$ unit). We consider only the former model in this work; the results can be generalized to the latter in a straightforward way.

Sufficient conditions for subcritical SIR epidemics, i.e., with small $\EE[N_{SIR}]$, have been (rigorously) established -- in particular, Driaef et al. \cite{draief08} show that $\beta d_{\max}< 1$, then $\EE[N_{SIR}]=O(1)$\footnote{One can also get results for the case when $\beta\lambda_1<1$, but the bounds are in terms of the eigenvectors of the adjacency matrix, making them harder to interpret in general.}. Our main result in this section is a generalization of this result for the setting with external agents:

\begin{theorem}
\label{thm:SIRsubPoly}
For the SIR epidemic on graph $G$ with external infection-rate $\mu=O(1)$, if $\beta (d_{max}-1) < 1$, then for any external infection policy, we have: $$\mathbb{E}[N_{SIR}]=O(1).$$
Further, if $\mu=o\left(\frac{\log n}{\log \log n}\right)$ then the number 
of nodes eventually infected is sub-polynomial -- i.e., $\EE[N_{SIR}]=n^{o(1)}$.
\end{theorem}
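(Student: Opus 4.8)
The plan is to carry over the one-dimensional-reduction strategy behind Theorem~\ref{thm:UBSIS}: dominate the infected count by a birth--death chain with immigration, make that chain ergodic by adding virtual transitions at state $0$, and read closed-form bounds off its stationary distribution. The new feature in the SIR case --- and the place where a coupling between the SIR and an SIS-type process is needed --- is a structural fact: a node that was infected along an edge can never re-infect (nor be re-infected by) its infector, so it has at most $d_{max}-1$ susceptible neighbours, whereas a node infected directly by the external source, a \emph{seed}, may have up to $d_{max}$. Writing $s_0=O(1)$ for the number of initially infected nodes and $S(t)$ for the number of currently infected seeds, the rate at which $|I(t)|$ grows is therefore at most $\beta(d_{max}-1)|I(t)| + \beta S(t) + \mu$, while it shrinks at rate $|I(t)|$. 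Since seeds arrive at rate $\le\mu$ and each recovers at rate $1$, $S(t)$ is stochastically dominated by an $M/M/\infty$ queue, so $\beta S(t)+\mu$ acts as an effective immigration term $\mu'=\Theta(\mu)$. A monotone coupling then produces a chain $Z(t)$ with $|I(t)|\leq_{st}Z(t)$, birth rate $\beta(d_{max}-1)Z+\mu'$ and death rate $Z$, which is positive recurrent precisely because $\beta(d_{max}-1)<1$.

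Granting this reduction, the rest is bookkeeping. Every distinct node ever infected in the SIR process is the target of exactly one edge-infection or external-infection event, hence, under the coupling, corresponds to one up-step of $Z$; and since the external source --- and therefore $Z$'s immigration --- is switched off once $|I(t)|=0$, we get $\mathbb{E}[N_{SIR}]\le s_0 + \mathbb{E}[U]$, where $U$ is the number of up-steps of $Z$ before $Z$ first hits $0$. Adding a virtual transition at $0$ makes $Z$ ergodic without altering the law of this excursion, and $\mathbb{E}[U]$ then has a closed form in terms of the stationary law of $Z$; for $\mu=O(1)$ it is $O(1)$, which is the first assertion. This route also yields $\mathbb{E}[T_{SIR}]=O(1)$ (the expected excursion length of $Z$) and needs no separate step from infection events to distinct nodes. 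By contrast, a more elementary argument --- decompose $N_{SIR}$ into one independent subcritical SIR sub-process per seed, each of expected size $O(1)$, and bound the number of seeds by $\mu\,\mathbb{E}[T_{SIR}]$ --- leads to a circular bound $\mathbb{E}[N_{SIR}]\le c_0(1+\mu\,\mathbb{E}[N_{SIR}])$ with $c_0$ a constant depending on $\beta,d_{max}$, and works only when $\mu$ is below a threshold; the chain-stability viewpoint is what removes that restriction.

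For the second assertion I would keep the same closed-form estimate and merely expose its dependence on the immigration rate: exactly as in the proof of Theorem~\ref{thm:muScales}(1), it has the form $f(\mu)$ with $f(\mu)=O(1)$ for $\mu=O(1)$ and $f(\mu)=n^{o(1)}$ whenever $\mu=o(\log n/\log\log n)$. Hence $\mathbb{E}[N_{SIR}]=n^{o(1)}$ in that slowly-growing regime.

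I expect the one-dimensional reduction itself to be the main obstacle. Unlike the pure-SIS case, where $|I(t)|$ is already comparable to a $1$-D chain with birth rate $\le\beta d_{max}|I(t)|+\mu$, here the reduction must (i) extract the non-backtracking saving from $d_{max}$ to $d_{max}-1$, since only $\beta(d_{max}-1)<1$, and not $\beta d_{max}<1$, is assumed, and this is exactly what makes the dominating chain positive recurrent --- the SIS/SIR coupling being the device for it; and (ii) absorb the omniscient/adversarial external feedback, which keeps seeding the epidemic as long as it survives. A subsidiary delicate point is the seed correction: one has to verify that infected seeds are rare enough --- the $M/M/\infty$ bound --- that their extra $d_{max}$-th neighbour folds into an $O(\mu)$ bump of $\mu'$ without spoiling positive recurrence or the closed-form estimates. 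The cruder observation that $I_{SIR}(t)\subseteq I_{SIS}(t)$ under a common-clock coupling is by itself useless here, since $N_{SIR}$ is a union-over-time quantity and the coupled SIS process eventually infects every node.
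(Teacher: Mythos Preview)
Your route is genuinely different from the paper's, and in fact you have dismissed the paper's actual argument. The paper never builds a dominating birth--death chain for the SIR count. It uses a static per-node inequality in the style of \cite{draief08}:
\[
\PP[X_i(\infty)=e]\le\PP[X_i(0)=1]+\frac{\beta}{\beta+1}\sum_{j\in\mathcal{N}(i)}\PP[X_j(\infty)=e]+\EE\Bigl[\int_0^{T_{SIR}}L_i(t)\,dt\Bigr],
\]
and summing over $i$ yields $\EE[N_{SIR}]\le\bigl(\EE[N_0]+\mu\,\EE[T_{SIR}]\bigr)\big/\bigl(1-\tfrac{\beta d_{\max}}{\beta+1}\bigr)$. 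So the $d_{\max}-1$ in the hypothesis comes from the transmission probability $\tfrac{\beta}{\beta+1}$, via $1-\tfrac{\beta d_{\max}}{\beta+1}=\tfrac{1-\beta(d_{\max}-1)}{\beta+1}$, and not from non-backtracking. The only remaining task is to bound $\EE[T_{SIR}]$, and here the paper invokes exactly the device you called ``by itself useless'': Lemma~\ref{lem:SISubSIRtime} gives $T_{SIR}\le_{st}T_{SIS}$, and then Theorems~\ref{thm:UBSIS} and~\ref{thm:muScales} supply $\EE[T_{SIS}]=O(1)$ (resp.\ $n^{o(1)}$) \emph{independently of $N_{SIR}$}. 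There is no circularity and no threshold restriction on $\mu$; the coupling is applied to $T_{SIR}$, not to $N_{SIR}$, so the union-over-time objection does not bite. What the paper buys is brevity: once the SIS bounds are in hand, the SIR result is a three-line corollary. What your approach would buy, if completed, is self-containment --- no appeal to the SIS extinction-time theorems at all.

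That said, your seed correction is a genuine gap, not a subsidiary detail. An $M/M/\infty$ bound gives $\EE[S(t)]=O(\mu)$, not a pathwise bound on $S(t)$, so you cannot absorb the random term $\beta S(t)$ into a deterministic immigration rate $\mu'$ and still claim a monotone coupling $|I(t)|\le_{st}Z(t)$ with $Z$ one-dimensional. The cheap repair $S(t)\le|I(t)|\le Z(t)$ yields birth rate $\le\beta d_{\max}Z+\mu$, which loses exactly the $d_{\max}-1$ you need for positive recurrence; a two-dimensional chain $(Z,Q)$ with $Q$ the dominating $M/M/\infty$ queue would work but forfeits the closed-form Ehrenfest machinery you rely on. The paper's per-node inequality handles seeds and non-seeds uniformly through the single factor $\tfrac{\beta}{\beta+1}$, so this difficulty never arises.
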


This means that any infection with intrinsic rate less than $d_{max}^{-1}$ can infect only a \emph{vanishing fraction} of the nodes if the external infection strength is less than $\frac{\log n}{\log \log n}$.

\begin{proof}[Proof Outline]
As existing bounds on the SIR epidemic are obtained by observing the structure of the infection as $t \to \infty$\cite{draief08}, it is difficult to take the same approach to incorporate external infection. We instead build on our results for the SIS epidemic spread via a coupling argument. The main observation is that for SIS and SIR epidemics on the same graph, with identical intrinsic rate $\beta$, as well as external rate $\mu$ (and identical external infection strategies), the extinction time for the SIS infection \emph{stochastically dominates} that of the SIR infection. Using this, we can use various coupling arguments to get the result. The complete proof is given in Section \ref{ssec:SIRproofs} of the appendix.
\end{proof}

These conditions, in particular, for ensuring sub-polynomial number of infected nodes, are again tight in case of graphs with sufficiently large (polynomial) maximum degree. In particular we have the following result:
\begin{lemma}
\label{thm:SIRManystars}
Consider a graph $G$ with maximum degree $d_{max}=\Omega(n^\alpha)$ for some $\alpha>0$. For an SIR epidemic with $\beta=\Omega\left(n^{\epsilon-\alpha}\right)$ for $0<\epsilon<\alpha$, aided by external infection-rate $\mu=\Omega(1)$, then there exists an external infection policy such that $\mathbb{E}[N_{SIR}]=\Omega(n^k)$ for some $k>0$.
\end{lemma}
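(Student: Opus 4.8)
The plan is to mirror the three-step strategy used for Theorem \ref{thm:SISpolyLB} (the polynomial-lifetime SIS result), exploiting the SIS/SIR coupling introduced for Theorem \ref{thm:SIRsubPoly}, but now reading the coupling in the direction that lower-bounds the SIR quantity of interest. First I would isolate the high-degree vertex: let $v$ be a node with $\deg(v) = d_{\max} = \Omega(n^\alpha)$, and let $U$ be its neighbourhood, so $|U| = \Omega(n^\alpha)$. Starting from an arbitrary $O(1)$ initial infected set, the external source (of virulence $\mu = \Omega(1)$) focuses its entire rate on $v$; within $O(1)$ time it infects $v$ with at least constant probability before $v$'s other dynamics matter. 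From this point the policy keeps re-infecting $v$ whenever $v$ reverts — in the SIR picture $v$ can only be used until it transitions to state $(e)$, so the right object to track is how many \emph{fresh} infections $v$ passes into $U$ before $v$ itself is removed.

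The core step is a branching/occupancy estimate on the star $(\{v\}\cup U)$. While $v$ is infected, each susceptible neighbour $u\in U$ is infected at rate $\beta$; with $\beta = \Omega(n^{\epsilon-\alpha})$ and $|U| = \Omega(n^\alpha)$, the aggregate rate at which $v$ infects \emph{new} neighbours is $\Omega(n^\epsilon)$ as long as a constant fraction of $U$ is still susceptible. Meanwhile $v$ is cured at rate $1$ (and in SIR, permanently removed), but the external source reinfects $v$ at rate $\Omega(1)$ each time it is cured, so $v$'s infected periods are i.i.d.\ $\mathrm{Exponential}(1)$ and the number of such periods before permanent removal is a geometric random variable with constant success parameter, hence $\Omega(1)$ in expectation. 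During a single infected period of length $\mathrm{Exp}(1)$, the number of new neighbours $v$ infects stochastically dominates a Poisson with mean $\Omega(n^\epsilon)$ (again, conditioning on a constant fraction of $U$ remaining susceptible, which holds until $\Omega(n^\epsilon) \ll n^\alpha$ infections have accumulated — true for $\epsilon < \alpha$). Multiplying through, $\EE[N_{SIR}] = \Omega(n^\epsilon)$, which gives the claim with $k = \epsilon$ (possibly after shrinking $\epsilon$). To make the conditioning rigorous I would run a coupled process in which newly-infected neighbours are simply declared removed immediately and never revisited, so the "constant fraction susceptible" assumption is automatic up to the target count; the true SIR process infects at least as many nodes as this pruned process.

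The technical subtlety — and the step I expect to be the main obstacle — is handling the interleaving of $v$'s removal with the neighbour-infection clock in the genuine SIR dynamics, since once $v\in(e)$ the star is inert and the external source can no longer help. I would address this by a race argument: condition on the sequence of $v$'s infected periods $\tau_1,\tau_2,\dots$ (each $\mathrm{Exp}(1)$, with a geometric number $K$ of them, $\EE[K] = \Omega(1)$), and note that even the first period alone already injects $\Omega(n^\epsilon)$ new infections in expectation with high probability, because the neighbour-infection rate $\Omega(n^\epsilon)$ dwarfs the $O(1)$ rate governing $v$'s state changes. A second-moment (or Chernoff) bound on the Poisson count then upgrades "in expectation" to "with probability bounded below by a constant," and combining with the constant success probability of the initial infection of $v$ yields $\EE[N_{SIR}] \geq c\, n^\epsilon$ for a constant $c > 0$. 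One should double-check the edge case where $\beta$ is large enough that neighbours of $v$ could re-infect $v$ faster than the external source — but this only helps, since it extends $v$'s usable lifetime, so the bound is unaffected.
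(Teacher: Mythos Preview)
The paper does not actually supply a proof of this lemma; it is stated in Section~\ref{ssec:SIR} and then the discussion moves on, with Section~\ref{ssec:SIRproofs} of the appendix containing only the proof of Theorem~\ref{thm:SIRsubPoly}. So there is nothing to compare against directly, and your proposal has to be judged on its own.

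Your overall strategy is sound and gives the result, but there is a conceptual slip you should clean up. You write that ``the external source reinfects $v$ at rate $\Omega(1)$ each time it is cured, so $v$'s infected periods are i.i.d.\ $\mathrm{Exponential}(1)$ and the number of such periods before permanent removal is a geometric random variable.'' This is SIS logic carried over from the proof of Theorem~\ref{thm:SISpolyLB}; in the SIR model, the moment $v$ recovers it enters state $(e)$ and is gone forever. There is exactly \emph{one} infected period for $v$, and the external source cannot resurrect it. The ``geometric number of periods'' argument is simply inapplicable here.

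Fortunately, as you yourself note a few lines later, a single period already suffices: conditional on $v$ becoming infected (which happens with probability at least $\mu/(1+\mu)=\Omega(1)$ before the $O(1)$ initial infected nodes all recover), $v$ remains infected for an $\mathrm{Exp}(1)$ time, and each of its $\Omega(n^{\alpha})$ neighbours is independently infected with probability $\beta/(1+\beta)=\Omega(n^{\epsilon-\alpha})$. Hence the expected number of neighbours infected is $\Omega(n^{\epsilon})$, and $\EE[N_{SIR}]=\Omega(n^{\epsilon})$ follows with $k=\epsilon$. So drop the multi-period paragraph entirely and go straight to the one-shot occupancy computation; the proof then becomes short and clean, and the coupling to SIS (Lemma~\ref{lem:SISubSIRtime}) is not needed at all for this direction.
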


Thus constant external infection does not significantly change the number of infected nodes when the underlying epidemic is subcritical.

\subsection{Discussion and Examples}
\label{ssec:examples}

We conclude the presentation of our results by showing how they can be applied to various settings. 

\noindent\textbf{Epidemiology:} Both the SIS and SIR models are widely used to model human epidemics \cite{satves02:scalefree}, and also computer viruses/worms \cite{kepwhite91:viruses}. SIR dynamics is used in cases where recovery from an infection prevents re-infection (e.g., resistance in humans, patches for computer viruses); the SIS dynamics more closely models cases where an infection may return via mutations in the parasite (in fact this is true even for computer worms, which have been known to circumvent patches by exploiting multiple vulnerabilities). Finally, external sources have long been known to promote the spread of such epidemics, be it via long-distance air travel/shipping propagating human diseases, or worms spreading via the Internet.

What do our results imply for such settings? On the positive side, Theorems \ref{thm:UBSIS} and \ref{thm:SISlambda} show that an external source of constant virulence  can not cause an infection to become supercritical -- further, Theorem \ref{thm:SISpolyLB} reaffirms the intuition that the \emph{critical nodes that need to be defended against the infection are the ones which have high-degree}. This is in a sense complementary to the results of Borgs et al. \cite{BorgsAntidote}, who show that an effective strategy for distributing \emph{antidotes} (i.e., augmenting recovery rates) is to provide it to nodes proportional to their degree. Our result can be viewed in terms of designing preventive measures -- to control epidemic spread, one should vaccinate/quarantine nodes with a high degree.

However, it can be argued that in many cases, the number of external agents actually does scale as $\gamma|I|$ - for example in case of long-distance travel, if we assume that the people traveling are randomly chosen (irrespective of whether infected or not), then the number of infected people traveling does scale linearly as the number of infected people. In such a regime, the import of Theorem \ref{thm:LBSISgammaI} is more grim -- it suggests that an infection can become supercritical if such travel is not curbed before a certain point, and further this is not affected if the travel is random. This matches the empirical observation that epidemics which have a long incubation period (i.e., do not show symptoms in carriers till long after infection) often tend to become more widespread.

\noindent\textbf{Advertising:} It is widely accepted that the recall of a product in the minds of consumers is jointly affected by its prevalence in social circles (e.g., if many your friends use a certain product, then you would tend to use it as well) and the use of advertising. Moreover, advertising can be both via broadcast media (e.g., TV/magazine ads, billboards) and also, increasingly, viral media (e.g., word of mouth, videos on Youtube, etc.). This scenario is very well modeled by the SIS/SIR epidemic -- the epidemic represents the recall of a brand, which is strengthened by local interactions, and also by broadcast advertising; further, the rate of TV/magazine ads are subject to advertising budgets.

In this domain, our results make very clear recommendations -- for any brand, a constant advertising budget is not sufficient to ensure long-lasting recall. However, increasing the advertising budget proportional to the number of customers (which is viable since the revenues also increase) can achieve supercritical behavior. In terms of advertising strategy, our results suggest that if the budget is small, then targeted advertising, biased towards highly-networked consumers, is more beneficial - however, if the budget is large enough (and scaling linearly with the number of customers), then broadcast advertising (which can be thought of as a random infection strategy) is sufficient.

\noindent\textbf{Popular Models for Network Formation:} Our results can be applied to specific graph families to obtain thresholds for the different regimes -- however, they may not be the tightest in all graph families. Such a program is carried out in \cite{massganesh05}, who characterize $\lambda_1$ and $\eta(m)$ for several families of graphs, including random graph families such as the $G(n,p)$ and certain families of power-law graphs -- since our thresholds are also in terms of the same quantities, these results carry over to our setting, and in addition we get additional results for the case of external infection; for example, for graphs with sufficiently high maximum degree, we show that $O(1)$ external infection rate is sufficient to push the lifetime to $poly(n)$.

However, our results allow for analysis of more sophisticated models of external infection. To illustrate this, consider the example of Kleinberg's small-world network construction \cite{kleinberg2002small} -- this consists of a $2$-dimensional grid, where each node has an additional long range link, the other end of which is selected according to some given distribution. Viewing these long-range links as external-infection sources, we see that the external infection rate increases roughly as $\beta|I|$ (in particular, if the long-range link is chosen uniformly, then $\mu=\beta\left(|I|-n^{-1+\alpha}\right)$ as long as $|I|\leq n^{\alpha}$). Now, using Theorem \ref{thm:LBSISgammaI}, we have that if $\beta>1$, then the epidemic is supercritical.

\section{Proofs of Selected Results}
\label{sec:analysis}

Due to lack of space, we defer complete proofs to the appendix. Instead, we outline some of the main technical novelty behind our results; these ideas may be of independent interest for tackling other questions regarding epidemic processes.

\subsection{Embedding in a Dominating Ergodic Markov Chain}
\label{ssec:mc}

The main idea for the results in the subcritical and supercritical regimes is to study the dynamics of the \emph{total number of infected nodes} in the network at time $t$. This depends on the exact network topology; however it is possible to stochastically dominate it via appropriate $1-$dimensional Markov Chains. 

\begin{figure}[!h]
\centering \includegraphics[scale=0.65]{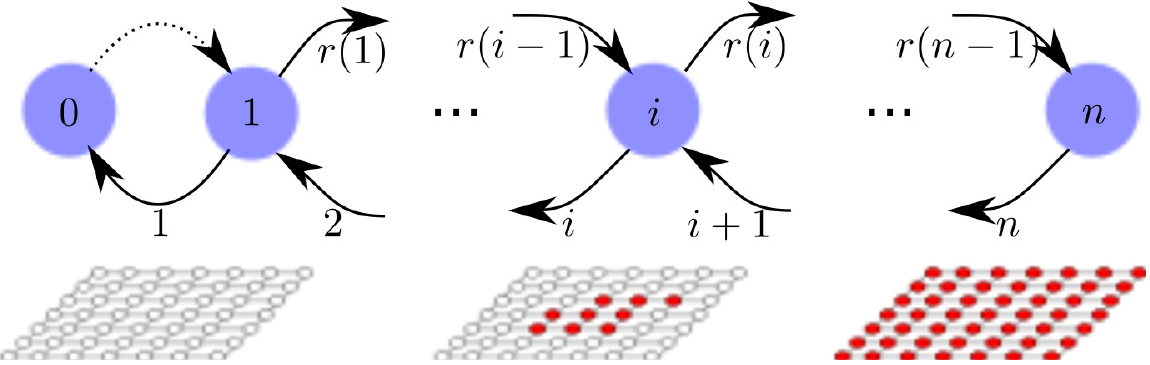}
\caption[SIS Epidemic: Dominating Aggregate-Chain]{Dominating chain for the total number of infected nodes. The transition rates $\{r(i)\}$ are appropriately chosen to over/under dominate the underlying process. A `virtual' $0\rightarrow 1$ transition makes the chain ergodic.}
\label{fig:1dmc}
\end{figure}

Similar techniques have been used before \cite{massganesh05,bbcs05}; however in all prior work, state $0$ is an absorbing state, and the absorption time is estimated via Martingale techniques. A technical novelty in our proof is that we embed the chain in an ergodic Markov chain, in a way that preserves the absorption time. This allows us to obtain closed-form expressions for the absorption time, and also consider more complex external infection strategies. We now outline this technique.

First, consider the continuous-time Markov chain for the aggregate infected nodes -- note that transitions from any state (i.e., increase/decrease in the number of infected nodes) happen at a rate at least $1$ and at most $\beta n^2+n+\mu\sim poly(n)$. This allows us to consider a uniformized discrete-time Markov chain; from the previous observation, we see that the number of time-slots till absorption for the discrete chain is within a $poly(n)$ factor of the time to absorption for the continuous chain. This is the reason this technique is most useful for the subcritical/supercritical regimes, but not fine-grained enough to capture the intermediate critical regime.

Now consider a DTMC $U_k, k\in\setZ_0$ on $\{0,1,\cdots,n\}$ with state-transitions $p_{i,i+1},p_{i,i-1}>0$ for $1\leq i\leq n-1$, $p_{n,n-1}=1$, and remaining transitions having probability $0$ -- i.e., a birth-death chain, with absorbing state at $0$ and reflection at $n$. Next, consider another DTMC, denoted $\tilde{U}_k$, on the same state space with the same transition matrix as $U_k$, but in addition, with $p_{0,1}=1$ (i.e., a reflection at $0$ instead of absorption). Note that $U_k$ has an absorbing state and hence is not an ergodic DTMC; $\tilde{U}_k$, however, is an irreducible, aperiodic, finite state Markov chain, and hence ergodic \cite{bremaud:mcbook}.

The chain $\tilde{U}_k$ is often referred to as the Ehrenfest Urn Process \cite{bremaud:mcbook}; we denote $\tilde{\pi}$ as the stationary distribution of $\tilde{U}_k$. Now let $T_{1,0}$ be the time to hit $0$ for the DTMC $U$ starting from state $1$ (i.e., the time to absorption with initial state $1$). Then we have the following lemma:
\begin{lemma}
\label{lem:ehrenfest}
$$\mathbb{E}[T_{1,0}]+1=\frac{1}{\tilde{\pi}(0)}$$
More specifically, in terms of the transition probabilities of $U_k$, with $p_{0,1}$ defined to be $1$, we have:
$$\mathbb{E}[T_{1,0}] = \sum_{k=1}^n \prod_{i=1}^k \frac{p_{i-1,i}}{p_{i,i-1}}$$
\end{lemma}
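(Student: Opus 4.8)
\noindent\emph{Proof plan.} The plan is to exploit the standard fact (Kac's formula) that for a finite, irreducible, aperiodic Markov chain the expected first-return time to any state equals the reciprocal of that state's stationary probability, and to apply it to state $0$ of the ergodic chain $\tilde U_k$. Writing $\tau_0^+$ for the first return time to $0$ under $\tilde U$ started at $0$, Kac gives $\mathbb{E}[\tau_0^+] = 1/\tilde\pi(0)$. I would then argue that $\tau_0^+ = 1 + T_{1,0}$ in distribution: since $p_{0,1}=1$, the chain $\tilde U$ leaves $0$ deterministically to $1$ in one step, and thereafter, because the transition probabilities of $\tilde U$ and of the absorbing chain $U$ agree at every state $i\ge 1$, and the path cannot return to $0$ without visiting only such states, the strong Markov property shows that the remaining time until $\tilde U$ hits $0$ has the same law as the absorption time $T_{1,0}$ of $U$ started at $1$. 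Taking expectations yields $\mathbb{E}[T_{1,0}] + 1 = 1/\tilde\pi(0)$, the first displayed identity.

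Next I would compute $\tilde\pi(0)$ explicitly. The chain $\tilde U$ is a birth-death chain on $\{0,\dots,n\}$ (with $p_{i,i+1},p_{i,i-1}>0$ for $1\le i\le n-1$, $p_{n,n-1}=1$, and the convention $p_{0,1}=1$), hence reversible, so its stationary distribution solves the detailed-balance equations $\tilde\pi(i)\,p_{i,i+1} = \tilde\pi(i+1)\,p_{i+1,i}$ for $0\le i\le n-1$. Iterating from $i=0$ gives $\tilde\pi(k) = \tilde\pi(0)\prod_{i=1}^{k}\frac{p_{i-1,i}}{p_{i,i-1}}$ for $1\le k\le n$, and the reflection $p_{n,n-1}=1$ is consistent with this recursion. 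Imposing $\sum_{k=0}^n \tilde\pi(k)=1$ then yields $\frac{1}{\tilde\pi(0)} = 1 + \sum_{k=1}^n \prod_{i=1}^{k}\frac{p_{i-1,i}}{p_{i,i-1}}$. Combining this with the identity from the first paragraph gives $\mathbb{E}[T_{1,0}] = \sum_{k=1}^n \prod_{i=1}^{k}\frac{p_{i-1,i}}{p_{i,i-1}}$, as claimed.

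The one delicate point will be the pathwise decomposition $\tau_0^+ = 1 + T_{1,0}$, i.e.\ making precise that running $\tilde U$ restarted at state $1$ until it hits $0$ is genuinely identical in law to running $U$ from $1$ until absorption; but this is immediate from the strong Markov property once one observes that the two transition kernels differ only at state $0$, which is not visited in the interim. Everything else — Kac's formula and the detailed-balance computation for a birth-death chain — is routine. (As a sanity check, one can also recover the same formula by first-step analysis on $U$, solving the second-order recursion for the hitting times $\mathbb{E}[T_{i,0}]$ directly; the Ehrenfest-urn route above is simply the cleaner packaging that also exposes the stationary distribution used later.)
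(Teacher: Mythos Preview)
Your proposal is correct and follows essentially the same approach as the paper: both arguments use Kac's formula (the paper phrases it as ``expected return time equals $\tilde\pi(i)^{-1}$'' without naming it), the decomposition $\tilde T_{0,0}=1+\tilde T_{1,0}$ via the deterministic $0\to 1$ step, the observation that $\tilde T_{1,0}$ and $T_{1,0}$ coincide in law because the kernels agree on states $\geq 1$, and then the detailed-balance computation for the birth--death chain to evaluate $\tilde\pi(0)$. Your write-up is in fact more careful about the coupling step than the paper's terse version.
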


\begin{proof}
First note that the distributions of $T_{1,0}$ and $\tilde{T}_{1,0}$ are same -- this is because a sample path from
state $1$ to state $0$ for $\tilde{U}_k$ does not experience a $0$ to $1$ transition. Further, since the transition from $0$ to $1$ happens with probability $1$, $\tilde{T}_{0,0}=1+\tilde{T}_{1,0}$ a.s,. and as $\tilde{U}_k$ is an ergodic Markov chain, the expectations for both the random variables exist, and also the expected return time
for any state $i$ is $\tilde{\pi}(i)^{-1}$. Putting these together we get the first claim. For the second, observe that for a birth-death DTMC with transition provabilities $p_{i,i+1}$ and $p_{i,i-1}$, we can exactly evaluate the steady state distributions using detailed balance equations -- this gives us the closed-form expression for $\mathbb{E}[T_{1,0}]$.
\end{proof}

The above lemma can be extended to a setting where the initial number of infected nodes is some $L>0$. As before, we construct a Markov chain $U^L_k$ that has the same
transition probabilities as $U_k$, but now with an additional transition $p_{0,k}=1$ -- again we have $U^L_k$ is irreducible, finite state, and hence ergodic. Let $\pi^L$ denote the stationary distribution for $U^L_k$. We have the following lemma:
\begin{lemma}
\label{lem:kEhrenfest}
$$\mathbb{E}[T_{L,0}]=\sum_{k=1}^Lf_k + f_L \sum_{k=L+1}^n\prod_{i=L+1}^k \frac{p_{i-1,i}}{p_{i,i-1}}$$
where $f_k$s are quantities that depend only on the transition probabilities for states in $\{0,1,\cdots,L\}$.
\end{lemma}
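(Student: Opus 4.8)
The plan is to follow the template of Lemma~\ref{lem:ehrenfest}: interpret $\mathbb{E}[T_{L,0}]$ as a mean return time in the ergodic chain $U^L_k$ and then evaluate its stationary mass at $0$ explicitly. First, observe that a trajectory of $U^L_k$ run from state $L$ until it first hits $0$ never traverses the virtual transition $p_{0,L}=1$ (that transition can only be used while sitting at $0$, which happens only at the terminal time); hence this excursion has exactly the law of $T_{L,0}$ for the original chain $U_k$ started at $L$. Since $U^L_k$ leaves $0$ deterministically for $L$, the return time of $U^L_k$ to $0$ is $1+T_{L,0}$, and because $U^L_k$ is irreducible on a finite state space (hence positive recurrent) its mean return time to $0$ equals $1/\pi^L(0)$. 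Therefore $\mathbb{E}[T_{L,0}]=1/\pi^L(0)-1$, and the problem reduces to computing $\pi^L(0)$.

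To obtain $\pi^L$ I would write the flow-balance equations across the cuts separating $\{0,1,\dots,j-1\}$ from $\{j,\dots,n\}$. For $j\ge L+1$ both endpoints of the virtual jump lie on the lower side, so only the edges $j-1\leftrightarrow j$ cross and we recover the usual detailed-balance relation $\pi^L(j)p_{j,j-1}=\pi^L(j-1)p_{j-1,j}$, which telescopes to $\pi^L(k)=\pi^L(L)\prod_{i=L+1}^{k}\frac{p_{i-1,i}}{p_{i,i-1}}$ for every $k\ge L$. For $1\le j\le L$ the virtual jump $0\to L$ also crosses the cut, contributing an extra unit of flow from $0$; the balance becomes $\pi^L(j)p_{j,j-1}=\pi^L(j-1)p_{j-1,j}+\pi^L(0)$, where $p_{0,1}=0$ because $0$ is absorbing in $U_k$ (so the added $\pi^L(0)$ is not double-counted even when $L=1$). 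Solving this first-order recursion downward from $j=1$ gives $\pi^L(k)=f_k\,\pi^L(0)$ for $0\le k\le L$ with $f_0=1$, each $f_k$ being a rational function of $\{p_{i-1,i},p_{i,i-1}:1\le i\le k\}$ only; in particular $\pi^L(L)=f_L\pi^L(0)$. Normalizing, $\sum_{k=0}^{L}\pi^L(k)+\sum_{k=L+1}^{n}\pi^L(k)=1$ yields
\[
\frac{1}{\pi^L(0)}=\sum_{k=0}^{L}f_k+f_L\sum_{k=L+1}^{n}\prod_{i=L+1}^{k}\frac{p_{i-1,i}}{p_{i,i-1}},
\]
and subtracting the $f_0=1$ term gives precisely the claimed identity, with the stated dependence of the $f_k$ on states in $\{0,1,\dots,L\}$.

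As a cross-check I would also run a direct first-passage argument: since the chain must descend through levels $L-1,L-2,\dots,0$ in order, the strong Markov property gives $\mathbb{E}[T_{L,0}]=\sum_{j=1}^{L}\mathbb{E}[T_{j,j-1}]$, and each $\mathbb{E}[T_{j,j-1}]$ follows from Lemma~\ref{lem:ehrenfest} applied to the chain restricted to $\{j-1,\dots,n\}$ with $j-1$ relabeled as the absorbing state; factoring $\prod_{i=j+1}^{L}\frac{p_{i-1,i}}{p_{i,i-1}}$ out of the part of the double sum whose summation index exceeds $L$ isolates the common tail $\sum_{k=L+1}^{n}\prod_{i=L+1}^{k}\frac{p_{i-1,i}}{p_{i,i-1}}$, with coefficient $f_L=\sum_{j=1}^{L}\frac{1}{p_{j,j-1}}\prod_{i=j+1}^{L}\frac{p_{i-1,i}}{p_{i,i-1}}$, and collects the rest into $\sum_{k=1}^{L}f_k$ (an alternative but equivalent closed form for the same quantities). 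The main obstacle is the bookkeeping in the stationary computation: one must track exactly which transitions cross each cut --- the only wrinkle being that the virtual jump $0\to L$ crosses precisely the cuts that separate $0$ from $L$ --- and must split the normalization sum at level $L$ so that the boundary mass $\pi^L(L)$ is counted once; solving the scalar recursion for the $f_k$ and summing the geometric-type products above level $L$ are then routine, exactly as in Lemma~\ref{lem:ehrenfest}.
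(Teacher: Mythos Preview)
Your proposal is correct and follows exactly the paper's approach: embed in the ergodic chain $U^L_k$, identify $\mathbb{E}[T_{L,0}]=1/\pi^L(0)-1$, and compute $\pi^L$ using that the chain is birth--death above level $L$ while the virtual jump $0\to L$ perturbs the balance equations only for states $1,\dots,L$. The paper's own proof is a one-sentence sketch (``identical to Lemma~\ref{lem:ehrenfest}, except the chain is birth--death only for states $\ge L$''); you have supplied precisely the flow-balance computation that sketch omits, and your additional first-passage cross-check (decomposing $T_{L,0}=\sum_{j=1}^{L}T_{j,j-1}$ and applying Lemma~\ref{lem:ehrenfest} to each step) is a nice independent verification not present in the paper.
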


\begin{proof}
The proof is identical to that of Lemma \ref{lem:ehrenfest} -- the only difference is that the chain is a birth-death chain for states $\geq L$. Thus the expression for the stationary probability of state $0$ has additional terms which depend on transition-rates of states between $1$ and $L$; in our applications, the terms $f_k,f_L$ are all $O(1)$, and do not affect the scaling.
\end{proof}

The above lemmas allow us to get detailed characterizations for the subcritical and supercritical regimes. For complete proofs of Theorems \ref{thm:UBSIS}, \ref{thm:muScales} and \ref{thm:LBSISgammaI}, refer Section \ref{ssec:SISsupercriticalproofs} in the Appendix.

\subsection{Dynamic Analysis of the SIR Epidemic}
\label{ssec:sirpf}

Studying thresholds for the SIR model with external infection requires understanding the transient behavior of the process. To this end, we develop the following coupling argument relating the dynamics of the SIR to the SIS model:

\begin{lemma}
\label{lem:SISubSIRtime}
Consider an SIR epidemic on graph $G$ with intrinsic infection-rate $\beta$ and a given external infection policy with infection-rate $\mu$. For an SIS epidemic on the same graph with identical external infection policy (i.e., same $L(t)\fall t$), we have:
$$T_{SIS} \geq_{st} T_{SIR}.$$
\end{lemma}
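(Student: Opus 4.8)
The plan is to construct an explicit coupling between the SIS and SIR processes on the graph $G$, driven by a common source of randomness, so that the SIS infected set always contains the SIR infected set, and then conclude that the SIR process hits $\mathbf{0}$ no later than the SIS process. Concretely, I would run both chains off the same collection of Poisson clocks: one clock of rate $\beta$ for each ordered edge $(i,j)$ governing intrinsic infection attempts, one clock of rate $1$ for each node governing recovery, and the external infection clock of rate $\mu$ with the same thinning probabilities $L_j(t)/\mu$ — note this is legitimate precisely because the hypothesis fixes the \emph{same} external policy $L(t)$ for both. Let $I_{SIS}(t)$ and $I_{SIR}(t)$ denote the infected sets; I would start both from the same initial infected set $\mathbf{X}(0)$, so that $I_{SIR}(0)\subseteq I_{SIS}(0)$ trivially.

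The key step is to verify that the containment $I_{SIR}(t)\subseteq I_{SIS}(t)$ is preserved by every type of transition. For an intrinsic infection event on edge $(i,j)$: in the SIR chain this fires only if $i$ is infected, $j$ is susceptible (i.e. not yet recovered and not infected); if it makes $j$ infected in SIR, then since $i\in I_{SIR}(t^-)\subseteq I_{SIS}(t^-)$, the same clock also infects $j$ in SIS (if $j$ is not already infected there), so $j$ enters both sets — or $j$ was already in $I_{SIS}$, which is also fine. For the external infection clock hitting node $j$: same argument, $j$ gets added to $I_{SIR}$ only if also added to (or already in) $I_{SIS}$. The delicate case is recovery: a recovery clock at node $j$ removes $j$ from $I_{SIR}$ (sending it to state $(e)$) and also removes $j$ from $I_{SIS}$ (sending it back to susceptible). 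Removing $j$ from both sets preserves the inclusion $I_{SIR}\setminus\{j\}\subseteq I_{SIS}\setminus\{j\}$. Thus whenever $j\in I_{SIR}(t^-)$ it is also in $I_{SIS}(t^-)$, so this simultaneous removal keeps the invariant. The one subtlety I want to be careful about is a node that has recovered in SIR (state $(e)$) but is susceptible in SIS: such a node can get reinfected in SIS but never in SIR — this only \emph{grows} $I_{SIS}$ relative to $I_{SIR}$, so the inclusion is still maintained. Hence by induction over the (a.s.\ finite, locally) sequence of events, $I_{SIR}(t)\subseteq I_{SIS}(t)$ for all $t\geq 0$ almost surely in the coupled probability space.

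Given the pathwise inclusion, $I_{SIS}(t)=\emptyset$ implies $I_{SIR}(t)=\emptyset$, so $T_{SIR}=\inf\{t:I_{SIR}(t)=\emptyset\}\leq \inf\{t:I_{SIS}(t)=\emptyset\}=T_{SIS}$ holds surjectively on every sample path of the coupling. Since the coupled marginals agree with the original laws of $T_{SIS}$ and $T_{SIR}$, this gives $T_{SIS}\geq_{st}T_{SIR}$, i.e.\ $\PP[T_{SIS}>r]\geq \PP[T_{SIR}>r]$ for all $r$.

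The main obstacle — really the only place requiring care rather than bookkeeping — is making the coupling of the external infection clocks airtight. The external rate $\mu$ and the direction weights $L_j(t)$ may depend on the full network state and history, and the states differ between the two chains; the hypothesis stipulates an \emph{identical} policy $L(t)$, so I must phrase this as: the external process is a single exogenously given (possibly state-dependent but fed the \emph{same} realization to both) point process, and verify that feeding an external infection at node $j$ to both chains simultaneously is consistent with the invariant, which it is by the infection argument above. A secondary technical point is ensuring no accumulation of transition times (standard, since all rates are bounded by a polynomial in $n$ as noted in Section \ref{ssec:mc}), so that the inductive argument over events is valid for all $t$.
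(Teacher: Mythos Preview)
Your proposal is correct and is precisely the coupling argument the paper invokes; the paper's own proof is a two-sentence sketch (``when a node gets recovered in the SIR model, it becomes susceptible in the SIS model\ldots the SIS epidemic dominates the SIR epidemic''), and you have simply supplied the details of that sketch. Your careful handling of the external clock and the recovered-in-SIR/susceptible-in-SIS case fills in exactly the bookkeeping the paper omits.
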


\begin{proof}
The result follows from a simple coupling argument, wherein when a node gets recovered in the SIR model, it becomes susceptible in the SIS model. Now, given that the external infection policies are identical, the SIS epidemic dominates the SIR epidemic with respect to extinction time. 
\end{proof}

\bibliographystyle{IEEEtran}
\bibliography{reference}

\appendix
\section{Proofs}
\label{sec:proofs}

\subsection{Subcritical SIS Epidemics with External Agents}
\label{ssec:SISsubcriticalproofs}

\noindent We now prove Theorems \ref{thm:UBSIS} and \ref{thm:muScales}, using the techniques discussed in Section \ref{ssec:mc}.

\begin{proof}[Proof of Theorem \ref{thm:UBSIS}]
As transitions in $\mathbf{X}(t)$ are via exponential clocks, hence the system is a continuous time Markov chain -- further $T_{SIS}$, the time when $\mathbf{X}(t)$ hits $\mathbf{0}$, is a stopping time with  respect to this Markov chain. Our aim is to upper-bound the time $\mathbb{E}[T_{SIS}]$. 

To do so, we study the one-dimensional chain induced by the number of infected nodes in the network -- more specifically, we stochastically dominate the total number of infected nodes as follows: Consider a Markov chain $Z(t)\in\{0,1,2,\ldots\}$ with $Z(0)=|\mathbf{X}(0)|$ and the state transition rates being the following:
\begin{align}
& \ i \to i+1 \ \mbox{at a rate} \ \beta d_{max} i + \mu \ \mbox{for} \ i>0\nonumber \\
& \ i \to i-1 \ \mbox{at a rate} \ i \ \mbox{for} \ i>0 \nonumber
\end{align}
We claim that $Z$ stochastically dominates $|\mathbf{X}(t)$, and consequently, the time to hit $0$ for $Z$ is a.s. larger than $T_{SIS}$. This can be formally argued via a coupling argument, but essentially it arises from the fact that for any state $\mathbf{X}(t)$, the maximum rate at which the infection can grow (i.e at which $|\mathbf{X}(t)|$ can increase) is $\beta|\mathbf{X}(t)|d_{max}+\mu$ -- this is because a node can try to infect at most $d_{max}$ nodes and the external infection can increase the number of infected nodes at a maximum rate of $\mu$. Also note that for the original process $|\mathbf{X}(t)|$ is the rate at which infection decreases, since each node recovers at rate $1$. Thus whenever $Z(t)=|\mathbf{X}(t)|=i$, the rate of growth for $Z$ is higher than that for $|\mathbf{X}|$, while the rate of decrease are the same for both the processes. Hence, we can instead focus on finding an upper bound on time to hit zero for the process $Z$.

Note that $Z$ is a process on a finite state space $[n]$ and is irreducible, with an absorbing state at $0$ -- hence the expected time to hit state $0$ is finite. We next consider $Z^D$, the  embedded discrete-time chain (i.e., jump chain) of $Z$, whose transitions ($\fall i>0$) are given by: 
\begin{align*}
p_{i,i+1}=\frac{\beta d_{max}i+\mu}{(\beta d_{max}+1)i+\mu},\,\, p_{i,i-1}=\frac{i}{(\beta d_{max}+1)i+\mu}
\end{align*}
Note that as the rate of jumps for any state $i>0$ is lower bounded by $1$ -- hence the expected time spent in any state is $O(1)$, and the time to absorption in the discrete time chain is orderwise equivalent to the absorption time in the original chain. Now we can use Lemma \ref{lem:kEhrenfest} to compute the expected hitting time to zero for $Z^D$, given initial state is $1$ (or some fixed $K$). Note that for any fixed $K>0$, the transition probabilities of $Z^D$ do not scale with $n$ for states in $[K]$; hence for the stated result, it is sufficient to show that the contribution due to the states $i\in\{K,K+1,\ldots,n\}$ is $O(1)$. We have:
\begin{align*}
\sum_{k=K}^n\prod_{i=1}^k \frac{p_{i,i+1}}{p_{i,i-1}} & \leq \sum_{k=1}^n \prod_{i=1}^k \frac{\beta d_{max} i + \mu}{i}\\
&\leq \sum_{k=1}^n (\beta d_{max})^k \exp\left(\sum_{i=1}^k \frac{\mu}{i \beta d_{max}}\right)\\
& \leq \sum_{k=1}^{\infty} (\beta d_{max})^k (k+1)^{\left(\frac{\mu}{\beta d_{max}}\right)}.
\end{align*}
Now note that for $\beta d_{max}<1$ and $\mu=O(1)$, the RHS in the last expression can be upper bound by a constant order moment of a geometric random variable -- hence it is $O(1)$. This completes the proof.
\end{proof}

Thus we have that $\beta d_{\max}<1$ and $\mu=O(1)$ is a sufficient condition for subcritical epidemics. We can also now generalize the above argument to prove Theorem \ref{thm:muScales}, wherein we bound the extinction time for $\mu$ scaling with $n$.

\begin{proof}[Proof of Theorem \ref{thm:muScales}]
We define $\gamma=\frac{1}{\beta d_{\max}}>1$ -- further we can always assume $\gamma=O(1)$, as otherwise, we can increase $\beta$, which can only increase the extinction time. Now following identical arguments as in those for Theorem \ref{thm:UBSIS}, we get that 
\begin{align*}
\EE[T_{SIS}] &\leq \sum_{k=1}^{n} (\beta d_{max})^k (k+1)^{\left(\frac{\mu}{\beta d_{max}}\right)}\leq\gamma\sum_{k=2}^{n}\gamma^{-k}k^{\gamma\mu}\\
&\leq\gamma\int_1^{\infty}\gamma^{-x}x^{\gamma\mu}dx\leq\frac{\gamma\Gamma(\gamma\mu+1)}{(\log\gamma)^{\gamma\mu+1}},
\end{align*}
where $\Gamma(x+1)=x!$. Now using Stirling's formula, and the fact that $\gamma=\Theta(1)$, we get that $\EE[T_{SIS}]=O(\mu^{2\mu})$. Now suppose $\mu=\frac{f(n)\log n}{\log\log n}$, where $f(n)$ is some function such that $\lim_{n\rightarrow\infty}f(n)=0$ (i.e., $\mu=o\left(\frac{\log n}{\log\log n}\right)$, and $f(n)=o(1)$) -- substituting, we have $E[T_{SIS}]=O(n^{f(n)})=n^{o(1)}$ -- similarly we can show that $\mu=\mbox{polylog}(n)$ is not sufficient for an exponential lifetime.
\end{proof} 

\subsection{Critical SIS Epidemics with External Agents}
\label{ssec:SIScriticalproofs}

Next we turn to Theorem \ref{thm:SISlambda}, wherein we derive a polynomial upper bound for the extinction time in the case when $\beta\lambda_1<1$: 

\begin{proof}[Proof of Theorem \ref{thm:SISlambda}]
From existing results (such as Theorem $3.1$ of Ganesh et al. \cite{massganesh05}), we have that in the absence of external agents, the time to extinction $T_{SIS}^i$ for the \emph{intrinsic} SIS process obeys $\mathbb{E}[T_{SIS}^i]=O(\log n)$ for any initial set of infected nodes, if $\beta\lambda_1<1$. We now bootstrap this to obtain our result.

Note that the external sources, under any strategy, can infect at most at a rate $\mu$ -- thus to upper-bound $\mathbb{E}[T_{SIS}]$ (the extinction time \emph{with external sources}), we can stochastically dominate it as follows: at time $0$, assume all nodes are infected, and start an exponential clock with rate $\mu$. Clearly the underlying SIS infection expires if all nodes recover before this clock expires; if however the clock stops before the SIS process dies, we restart the process with \emph{all nodes infected}, and start another exponential clock with $\mu$. By standard coupling arguments, the time to extinction for this new process stochastically dominates $T_{SIS}$ for the original process.

Given an SIS process starting with all nodes initially infected, the time to extinction without external sources is $T_{SIS}^i$; further, let $Z_{\mu}\sim Exponential(\mu)$. Then we have:
\begin{align*}
\PP[Z_{\mu}>T_{SIS}^i]&=\mathbb{E}[\exp(-\mu T_{SIS}^i)]\\
&\geq \exp(-\mu \mathbb{E}[T_{SIS}]) \geq n^{- C},
\end{align*}
where $C>0$ is some constant (independent of $n$); the second inequality above follows from Jensen's inequality, and also we have used the fact that $\mathbb{E}[T_{SIS}^i]=O(\log n)$ 
and $\mu=O(1)$.

Note that the above quantity lower-bounds the probability that, in the constructed process, the infection process expires before the clock (corresponding to the external source) stops. Hence the expected number of regenerations that take place before the process expires is upper-bounded by
$O(n^C)$. Further, the expected time between each regeneration is $\mathbb{E}[\min\{T_{SIS}^i,Z_{\mu}\}]$ -- this
is upper-bounded by $\mathbb{E}[Z_{\mu}]=\frac{1}{\mu}=O(1)$. Thus $\mathbb{E}[T_{SIS}]=O(n^{O(1)})$.
\end{proof}

To complete our analysis of the critical regime, we show that polynomial extinction-times are in fact achievable for any graphs with sufficiently large (i.e., polynomial) degree. We consider the following strategy: the external agent targets any susceptible node with degree $\Omega(n^{\alpha-\epsilon+\delta})$ (henceforth referred to as a high-degree node); else it infects a susceptible node chosen at random (in fact, for ease of exposition, we assume in the proof that the external node only infects the maximum-degree node; however it is easy to see that the above strategy is sufficient to show polynomial lifetime).

A crucial result we use in our proof is as follows: note that as $\beta$ scales as $n^{\epsilon-\alpha}$, the expected number of infection attempts due to an infected high-degree node before it recovers is $\Omega(n^\epsilon)$. Suppose now we have that a high-degree node is infected, but none of its neighbors is infected -- we show that with high probability (i.e., at least $1-\frac{1}{\mbox{poly}(n)}$), the number of neighbors in infected state at the time the \emph{first} node recovery occurs is $\mbox{poly}(n)$. Subsequently, we show that given a $\mbox{poly}(n)$ number of infected neighbors, a node gets infected with a probability at least $1-n^{-\Omega(1)}$ \emph{in presence of $O(1)$ external infection} -- note that this is not true in the absence of the external agents (when $\beta\lambda_1<1$). Combining these observations, we show that the expected time to extinction of the epidemic is $\mbox{poly}(n)$.
\begin{lemma}
\label{lem:star}
Given a star-graph on $m$ nodes, and an SIS infection with $\beta=\Omega(m^{\epsilon-1})$, for some constant $\epsilon>0$. Suppose only the hub is infected at time $t=0$ -- then at the time when the first node recovery occurs, the number of infected nodes in the system is $\Omega(m^{\epsilon /3})$ with probability at least $1-O\left(\frac{1}{m^{\epsilon /3}}\right)$.
\end{lemma}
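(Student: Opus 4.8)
The plan is to track the race between two competing exponential-clock processes on the star: the recovery clock of the hub, and the collection of infection clocks that the hub activates on its $m-1$ leaves. Since only the hub is infected at $t=0$, each leaf becomes infected at rate $\beta$, so the first leaf-infection arrives after an $\mathrm{Exponential}((m-1)\beta)$ time; meanwhile the hub recovers after an independent $\mathrm{Exponential}(1)$ time, and each already-infected leaf recovers at rate $1$. I would bound the number of leaves infected by the time of the \emph{first recovery event in the whole system} (hub or any leaf). The key simplification: as long as the hub is infected and no recovery has occurred, the number of infected leaves is exactly a pure-birth (Yule-type) process with rate $\beta$ per still-susceptible leaf, and the first recovery event is a competing exponential clock whose rate is $1$ (hub) plus (number of infected leaves) $\times 1$. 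So I would compare the time $\tau$ to reach $k := \lceil m^{\epsilon/3}\rceil$ infected leaves against the time $R$ to the first recovery, and show $\PP[\tau < R] \ge 1 - O(m^{-\epsilon/3})$.

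The main steps, in order. First, upper-bound $\tau$: the time to go from $j$ to $j+1$ infected leaves is $\mathrm{Exponential}((m-1-j)\beta)$, and for $j \le k \ll m$ this rate is at least, say, $(m-1)\beta/2 = \Omega(m^\epsilon)$, so $\EE[\tau] = O(k m^{-\epsilon}) = O(m^{-2\epsilon/3})$ and, by a Markov/Chernoff bound on a sum of $k$ independent exponentials, $\PP[\tau > m^{-\epsilon/2}] \le O(m^{-\epsilon/6})$ or better. Second, lower-bound $R$: conditioned on no recovery yet, while there are at most $k$ infected leaves the total recovery rate is at most $1 + k = O(m^{\epsilon/3})$, so $R$ stochastically dominates an $\mathrm{Exponential}(O(m^{\epsilon/3}))$ variable; hence $\PP[R < m^{-\epsilon/2}] = O(m^{\epsilon/3} \cdot m^{-\epsilon/2}) = O(m^{-\epsilon/6})$. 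Third, combine: on the complementary event (probability $1 - O(m^{-\epsilon/6})$) we have $\tau \le m^{-\epsilon/2} \le R$, so the first recovery occurs only after at least $k = \Omega(m^{\epsilon/3})$ leaves have been infected; tightening the constants in the exponents (choosing the threshold time more carefully, e.g. $m^{-2\epsilon/3+\epsilon/9}$) pushes the failure probability down to the claimed $O(m^{-\epsilon/3})$. One technical point to handle cleanly: I must make sure the ``first recovery'' I condition on is genuinely the first recovery \emph{anywhere}, so that up to that moment the infected-leaf count is monotone increasing and the pure-birth coupling is exact; this is immediate since before any recovery the only transitions are leaf infections.

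The main obstacle I expect is bookkeeping the exponents to land exactly at $m^{\epsilon/3}$ with failure probability $O(m^{-\epsilon/3})$, rather than just \emph{some} polynomially-small quantities — the naive bounds above give $m^{\epsilon/6}$-type errors, so I would need to optimize the intermediate time-threshold (balancing $\PP[\tau > s]$ against $\PP[R < s]$ as functions of $s$), or equivalently choose the target count $k = m^{\epsilon/3}$ and note there is slack because the birth rate $\Omega(m^\epsilon)$ vastly exceeds the recovery rate $O(m^{\epsilon/3})$ at the relevant scale. A secondary subtlety is that the lemma is stated for a bare star, but in the application the star is a subgraph of $G$ with the hub of degree $d_{\max} = \Omega(n^\alpha)$; here I would simply restrict attention to the hub and its neighbors and use a monotonicity/coupling argument (infections coming from outside the star only help), which is routine given Lemma~\ref{lem:SISubSIRtime}-style coupling arguments already in the paper. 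The heart of the argument is genuinely just the pure-birth-versus-exponential-clock comparison; everything else is standard concentration for sums of independent exponentials.
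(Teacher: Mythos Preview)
Your approach is correct in outline and can be made to yield the stated bound, but it takes a different and less direct route than the paper. The paper does not introduce an intermediate time threshold $s$ at all: instead it works in the embedded jump chain. When $N_t=i$ (hub plus $i-1$ infected leaves), the next event is an infection with probability $\frac{\beta(m-i)}{\beta(m-i)+i}$ and a recovery otherwise; by the memoryless property these Bernoulli outcomes are independent across steps. Hence
\[
\PP[N_T\geq k]\;\geq\;\prod_{i=1}^{k}\frac{\beta(m-i)}{\beta(m-i)+i}\;\geq\;\Bigl(1+\tfrac{k}{\beta(m-k)}\Bigr)^{-k}\;\geq\;1-\frac{k^2}{\beta(m-k)},
\]
and plugging in $k=m^{\epsilon/3}$, $\beta\geq m^{\epsilon-1}$ gives $1-O(m^{-\epsilon/3})$ in one line. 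Your timing argument recovers the same conclusion but with extra work: the Markov-based version you sketch only gives $O(m^{-\epsilon/6})$, and you correctly note that a Chernoff bound on the sum of $k$ exponentials is needed to reach $O(m^{-\epsilon/3})$. (In fact, within your framework there is a shortcut that avoids Chernoff: a union bound over the $k$ recovery clocks gives $\PP[R\leq\tau_k]\leq k\,\EE[\tau_k]=O(k^2/(\beta m))=O(m^{-\epsilon/3})$ directly.) So both arguments are valid; the paper's embedded-chain product is simply cleaner, sidestepping all concentration estimates, while your route would be preferable only if one needed explicit control of $\tau$ or $R$ separately, which the lemma does not require. Your remarks about the application to the subgraph of $G$ are fine but not part of this lemma's proof; the paper handles that reduction in the proof of Theorem~\ref{thm:SISpolyLB}.
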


\begin{proof}
First, note that decreasing the value of $\beta$ can only make it less likely to have several infections before a recovery -- hence we can assume that $\epsilon <1$. Define $k=m^{\epsilon /3}$. Now let $N_t$ be the number of infected nodes at time $t$ (thus $N_0=1$), and $T$ be the first time that some infected node recovers. For some constants $c_1,c_2>0$, we want to show that $\PP[N_T\geq c_1 k]$ with probability $1-c_2k^{-1}$. 

For the system at a time $t<T$, suppose $N_t=i$ -- note that the hub is still infected, since no recoveries have occurred. Then there are $2$ possible events -- either $N_t$ goes to state $i+1$ at rate $\beta(m-i+1)$, or it stops (due to node recovery) at rate $i$. Now we have:
\begin{align*}
\PP[N_T\geq k]&\geq \PP[N_t\mbox{ reaches $k$ before stopping}]\\
&=\prod_{i=1}^k\left(\frac{\beta(m-i+1)}{\beta(m-i+1)+i}\right)\\
&\geq\left(1+\frac{k}{\beta(m-k+1)}\right)^{-k}\geq 1-\frac{k^2}{\beta(m-k+1)}\\
&=1-\frac{m^{2\epsilon /3}}{m^{\epsilon}-m^{5\epsilon /3-1}+m^{\epsilon-1}}
\geq 1-\frac{m^{-\epsilon /3}}{1-o(m)}.
\end{align*}
Choosing $m$ sufficiently large, we get the result
\end{proof}

\begin{proof}[Proof of Theorem \ref{thm:SISpolyLB}]
We assume henceforth that the external agent only tries to infect the (unique) node with maximum degree -- this is without loss of generality for the scaling results. First, note that irrespective of the initial state, the maximum-degree node gets infected with probability at least $\frac{\mu}{1+\mu}=\Omega(1)$ -- thus the expected lifetime with any initial set of infected nodes is at least a constant fraction of the expected lifetime if the epidemic started at the maximum-degree node. Further, note that the lifetime of the epidemic on graph $G$ if some edges are removed is stochastically dominated by the lifetime on $G$ -- this means that to complete the proof, it is sufficient to show that the SIS epidemic has a lifetime which is poly$(n)$ if on a \emph{star graph} with a hub and $n^{\alpha}$ leaf-nodes, and an additional infection-source of rate $\mu=O(1)$ for the hub. We do this by analyzing cyclical epochs characterized by the reinfection/recovery of the hub -- this is similar to techniques used in \cite{massganesh05}, \cite{bbcs05}, however the presence of external infection-sources changes the lifetime dramatically.

We construct the following process which under-dominates the epidemic on the star: at time $0$, we assume that only the hub is infected; subsequently, when \emph{either the hub or any of its neighboring leaf-nodes} recovers for the first time, we instead make the hub recover -- this clearly under-dominates the epidemic, as we have the same recovery rate, but a smaller infection rate in the system. We then show that with probability at least $1-n^{-\Omega(1)}$, the hub gets re-infected before all the infected leaf-nodes get recovered. If this happens, we can again restart the system with only the hub infected (and all leaf-nodes susceptible) -- note that this implies the expected number of such regenerations is $n^{\Omega(1)}$, and the time spent in each epoch is at least $\Omega(1)$, which gives us the desired polynomial expected lifetime.

First, suppose the hub is infected at some time $t$ -- then from Lemma \ref{lem:star}, we have that when any infected node in the neighborhood recovers for the first time after $t$, then the number of infected nodes in the system is $\Omega(n^{\alpha\epsilon/3})$ with probability at least $1-\Omega\left(n^{-\alpha\epsilon/3}\right)$. As we described above, we now assume that the hub is the node which recovers first, and thus there are $\tilde{n}=\Omega(n^{\alpha\epsilon/3})$ infected neighbors.

Next, given that there are $\tilde{n}$ infected neighbors when the hub recovers -- then the probability of the event $A_{\tilde{n}}$ that the hub is not re-infected before all the neighbors recover is given by the following.
$$\PP[A_{\tilde{n}}]=\prod_{i=0}^{\tilde{n}-1}\frac{\tilde{n}-i}{\mu + (\tilde{n} -- i +1) (1+\beta)}$$
To see this, note that if $k$ neighbors are infected, then the probability that one of them recovers before the hub is re-infected is the same as that among three exponential random variables of rate $\mu$, $k\beta$ and $k$, the one with rate $k$ is minimum, i.e., $\frac{k}{\mu + k(1+\beta)}$; subsequently, we can use the memoryless property of the exponential distribution. We can now bound the above as follows:
\begin{align*}
\PP[A_{\tilde{n}}]&=\prod_{i=0}^{\tilde{n}-1}\frac{\tilde{n}-i}{\mu + (\tilde{n} -- i +1) (1+\beta)}\leq \prod_{i=1}^{\tilde{n}} \frac{1}{1+\beta+\frac{\mu}{i}}\\
&\leq \exp\left(-\sum_{i=1}^{\tilde{n}}\left(\beta+\frac{\mu}{i}\right)+ \sum_{i=1}^{\tilde{n}}\frac{1}{2}\left(\beta+\frac{\mu}{i}\right)^2\right),
\end{align*}
where we use that $1+x\geq e^{x-\frac{x^2}{2}}$. Now, using the fact that $\log n+1\leq  H_n\leq\log(n+1)$, we can simplify to get:
\begin{align*}
&\PP[A_{\tilde{n}}]
\leq C (\tilde{n}+1)^{-\mu(1-\beta)} = O\left(n^{-\alpha\epsilon\mu/3}\right),
\end{align*}
where we have used the fact that $\beta\lambda_1\leq 1$ and hence is $\beta\leq 1$. Combining the two (via the union bound), we get the probability that the hub gets re-infected before the infection dies in the network is at least $1-O\left(n^{-\alpha\epsilon\mu/3}\right)=1-n^{-\Omega(1)}$. Thus, there are $n^{\Omega(1)}$ such regenerations on average before the infection dies in the network.
\end{proof}

\subsection{Supercritical SIS Epidemics with External Agents}
\label{ssec:SISsupercriticalproofs}

We now give a sufficient condition for supercritical SIS epidemics in the presence of external sources.

\begin{proof}[Proof of Theorem \ref{thm:LBSISgammaI}]
To obtain a sufficient condition for supercritical behavior, we need to lower bound $\mathbb{E}[T_{SIS}]$ -- for this, we stochastically under-dominate the process via the following (one-dimensional) Markov chain: let $Y(t)$ be a continuous time Markov chain on the state space $\{0,1,\cdots,m\}$ with the following transition rates.
\begin{align*}
i \to i+1 &\mbox{ at rate} \ \beta \eta(i) i + \mu \ \mbox{for} \ i>0,\\
i \to i-1 &\mbox{ at rate} \ i \ \mbox{for} \ i>0, 
\end{align*}
and all other transitions having rate $0$. Essentially, $\eta(i)$ captures the bottleneck cut for all sets of size $\leq i$ -- standard coupling arguments show that this process is stochastically dominated by $|\mathbf{X}[t]|$.

To lower bound the time to absorption for this chain we use an approach similar to that in the proof of the theorem \ref{thm:UBSIS} -- we construct an appropriate discrete-time ergodic Markov chain, so that we can use Lemma \ref{lem:ehrenfest}. Note that to find a lower-bound on the expected time to hit zero, it is sufficient to consider an initial state where only $1$ node is infected.

The embedded chain $Y^D$ corresponding to $Y$ has transition probabilities $p_{i,i+1}=\frac{\beta\eta(i)i+\mu}{\beta \eta(i)i+\mu+i}$ and $p_{i,i-1}=\frac{i}{\beta\eta(i)i+\mu+i}$ for all $i>0$ (note we still assume $i\leq m$), and all other transition probabilities being $0$. Defining $T_{1,0}^D$ to be the time for the embedded chain to hit state $0$ starting from $1$, then from Lemma \ref{lem:ehrenfest}), we have:
\begin{equation*}
\EE[T_{1,0}^D]=\sum_{k=1}^n\prod_{i=1}^k\left(\beta\eta(i)+\frac{\mu(i)}{i}\right)
\end{equation*}

Next, we want to lower bound the time that the system spends in each state. Let $M=\beta n^2+n+\mu$ -- then the expected time spent at any state of the continuous time Markov chain $Y$ is lower-bounded by $\frac{1}{M}$. Hence $\mathbb{E}[T_{SIS}]=\Omega\left(\frac{e^{n^{\alpha}}}{M}\right)$ which
is $e^{n^{\Omega(1)}}$ as long as $M=O(\mbox{poly}(n))$. 

Thus we have that a sufficient condition for supercritical behavior is given by $\EE[T^D_{1,0}]=\Omega\left(e^{n^{\Omega(1)}}\right)$. Now if $\fall i\in\{1,2,\ldots,n^{\alpha}\}$ for some $\alpha>0$, we have $\beta\eta(i)+\frac{\mu(i)}{i}\geq 1+\epsilon$ for some $\epsilon>0$, then the LHS of the above equation is $\Omega\left(e^{n^\alpha}\right)$, which is what we desire. 

To obtain the condition given in Theorem \ref{thm:LBSISgammaI}, note that random infection strategy is successful at a rate that equals $\tilde{\mu}(i)=\mu(i)\left(1-\frac{i}{n}\right)$, when $|I|=i$. Now suppose we have $\beta\eta(n^{\alpha})+\gamma\geq 1+\epsilon$ for some $\epsilon>0$. Then we have:
\begin{align*}
\EE[T^D_{1,0}]&\geq\sum_{k=1}^{n^{\alpha}}\prod_{i=1}^{k}\left(\beta\eta(n^{\alpha})+\gamma\left(1-\frac{i}{n}\right)\right)\\
&\geq\sum_{k=1}^{n^{\alpha}}\prod_{i=1}^{k}\left(1+\epsilon -\frac{\gamma i}{n}\right). 
\end{align*}
Now we can choose $n$ large enough such that we have $\gamma n^{\alpha-1}<\frac{\epsilon}{2}$ - this then ensures $\EE[T_{1,0}^D]=e^{n^{\Omega(1)}}$.
\end{proof}

\subsection{SIR Epidemics with External Agents}
\label{ssec:SIRproofs}

%
%
Finally, we utilize Lemma \ref{lem:SISubSIRtime} to get conditions for subcritical SIR epidemics:

\begin{proof}[Proof of Theorem \ref{thm:SIRsubPoly}]
We first consider a standard technique for bounding the size of eventual infection without external agents (for example, in \cite{draief08}).The main observation is that the probability of a node $i$ to get infected eventually is dependent on its being infected initially and its neighbors $\mathcal{N}(i)=\{j\in V|(i,j)\in E\}$ being infected eventually. We have:
$$\PP[X_i(\infty)=e]\leq\PP[X_i(0)=1]+\frac{\beta}{\beta+1}\sum_{j\in\mathcal{N}(i)}\PP[X_j(\infty)=e]$$

Next, to account for the external sources, we can add an additional term to the RHS to capture the probability that the node $i$ is infected by external agent at some time. Recall that we defined $L_(t)$ to be an inhomogeneous Poisson process representing the external infection at node $i$ -- then the probability of $i$ being infected via the external source is given by $1-\exp(\int_0^{T_{SIR}} L_i(t)dt)$, which is upper-bounded by $\int_0^{T_{SIR}}L_i(t) dt$ for each $i$. Adding up the inequalities for all the nodes, we get:
\begin{align*}
\sum_{i=1}^n\PP[X_i(\infty)=e]&\left(1-\frac{\beta d_i}{\beta+1}\right) \leq\\ &\EE[N_0]+\EE\left[\sum_{i=1}^n\int_0^{T_{SIR}}L_i(t)dt\right],
\end{align*}
where $N_0=\sum_i\mathds{1}_{\{X_i(0)=e\}}$ is the number of initially infected nodes. Reorganizing, and using $d_i\leq d_{\max}$ and $\beta(d_{\max}-1)<1$, we get:
\begin{equation*}
\EE[N_{SIR}] \leq \frac{\EE[N_0]+\mu\EE[T_{SIR}]}{1-\frac{\beta d_{max}}{\beta+1}}
\end{equation*}
To conclude the proof, we need to bound $\EE[T_{SIR}]$ -- however from Lemma \ref{lem:SISubSIRtime}, we know it is sufficient to bound $\EE[T_{SIS}]$. Further, from Theorem \ref{thm:muScales}, we have that if $\mu=o\left(\frac{\log n}{\log\log n}\right)$, then $T_{SIS}=\mbox{subpoly}(n)$. This completes the proof.
\end{proof}

\end{document}